\theoremstyle{plain}
\newtheorem{theorem}{Theorem}
\newtheorem{lemma}[theorem]{Lemma}
\newtheorem{proposition}[theorem]{Proposition}
\newtheorem{corollary}[theorem]{Corollary}
\theoremstyle{definition}
\theoremstyle{remark}
\newtheorem{remark}[theorem]{Remark}
\newcommand{\expec}[1]{\mathbb{E}[#1]}
\newcommand\numberthis{\addtocounter{equation}{1}\tag{\theequation}}
\newcommand{\norm}[1]{\left\lVert#1\right\rVert}
\newcommand{\calX}{\mathcal{X}}
\newcommand{\calB}{\mathcal{B}}
\newcommand{\SO}{\mathcal{S}}
\DeclareMathOperator*{\argmin}{arg\,min}
\DeclareMathOperator{\vol}{vol}
\DeclareMathOperator{\dist}{dist}
\DeclareMathOperator{\LMI}{LMI}
\DeclareMathOperator{\eig}{eig}
\DeclareMathOperator{\rand}{rand}
\DeclareMathOperator{\eye}{eye}
\DeclareMathOperator{\blkdiag}{blkdiag}
\DeclareMathOperator{\triu}{triu}
\DeclareMathOperator{\diag}{diag}
\DeclareMathOperator{\cg}{cg}
\DeclarePairedDelimiter{\ceil}{\lceil}{\rceil}
\def\norm#1{\mathopen\| #1 \mathclose\|}% use instead of $\|x\|$ 
\def\vol{\mbox{vol}}
\newcommand{\R}{\mathbb{R}}
\newcommand{\nz}{\textrm{nz}}
\begin{document}

\title{A Cutting-plane Method for Semidefinite Programming \\ with Potential Applications on Noisy Quantum Devices}

\author{J. Mare\v{c}ek}
\email{jakub.marecek@fel.cvut.cz}
\affiliation{Department of Computer Science, Czech Technical University, Karlovo nam.~13, Prague 2, the Czech Republic}
\author{A. Akhriev}
\email{albert\_akhriev@ie.ibm.com}
\affiliation{IBM Quantum, IBM Research -- Europe, IBM Dublin Technology Campus, Mulhuddart Dublin 15, Ireland}

\begin{abstract}
    There is an increasing interest in quantum algorithms for optimization problems. Within convex optimization, interior-point methods and other recently proposed quantum algorithms are non-trivial to implement on noisy quantum devices. Here, we discuss how to utilize an alternative approach to convex optimization, in general, and semidefinite programming (SDP), in particular. This approach is based on a randomized variant of the cutting-plane method. We show how to leverage quantum speed-up of an eigensolver in speeding up an SDP solver utilizing the cutting-plane method. 
    For the first time, we demonstrate a practical implementation of a randomized variant of the cutting-plane method for semidefinite programming on instances from SDPLIB, a well-known benchmark.
 Furthermore, we show that the RCP method is very robust to noise in the boundary oracle, which may make RCP suitable for use even on noisy quantum devices.     
    \end{abstract}

\maketitle

\section{Introduction}

Considering that gate-based quantum computers are expected to aid in solving specific optimization problems across many domains, including quantum chemistry  \cite{Ganzhorn2019}, machine learning \cite{Havlicek2019}, and computational finance \cite{egger2020quantum}, it may seem natural to seek quantum algorithms for convex  optimization.
Quantum speedup in convex optimization seems elusive, in general. Garg \emph{et al.} \cite{garg2021} have shown that in optimizing a Lipschitz-continuous, but otherwise arbitrary convex function over the unit ball,  first-order methods \cite{garg2021} have no quantum speedup over gradient descent, when restricted to the black-box access to the values and gradients of the convex function.
One should hence consider other special cases, preferably as broad as possible, and possibly avoiding the black-box access. 

Semidefinite programming (SDP) is a broad special case of convex optimization, which has attracted a substantial recent interest.
Initially, \cite{8104077,8104076,brandao2019quantum,van2019improvements}
``quantized'' the so-called multiplicative-weight-update (MWU) algorithm
of Arora and Kale \cite{arora2007combinatorial} and its variants by Hazan \cite{hazan2008sparse}. 
\footnote{We refer to Algorithm 6 in \cite{8104077} for a nice overview of the algorithm.}
Subsequently, \cite{kerenidis2018quantum,augustino2021inexact} attempted a translation of primal-dual interior-point methods  \cite{wright1997primal} to quantum computers.
\footnote{
We refer to Chapter 1 of \cite{wright1997primal} for an excellent introduction to primal-dual interior-point methods.
Due to the reliance of reliance of interior-point methods on solving linear systems, \cite{kerenidis2018quantum,augustino2021inexact} ended up with a bound dependent on the condition number $\kappa$ of a linear system based on the Karush-Kuhn-Tucker (KKT) conditions. As is well known \cite[p. 215]{wright1997primal}, this goes to infinity for all instances, by the design of the method, which may be not ideal in practice.
Furthermore, there is the issue of the HHL algorithm \cite{harrow2009quantum} providing the solution of the linear system only as a quantum state, whereas the interior-point method \cite{kerenidis2018quantum,augustino2021inexact} needs a classical update. The HHL hence needs to be run many times and the quantum state measured many times, to estimate the classical update.}
Finally, in~\cite{gilyen2019optimizing,van2020quantum,chakrabarti2020quantum}, the authors study the relationship of several oracles useful in first-order algorithms, but do not claim a run-time of a particular algorithm for SDPs.
These results are summarized in Table~\ref{tab:cvxopt}. \footnote{As it has been shown in \cite[Appendix E]{van2020quantum}, in the MWU algorithm, $\frac{rR}{\epsilon}$
should be seen as an important parameter, as one can trade-off dependence on one of the
three individual parameters for the dependence on the others.}
While some of the quantum algorithms \cite{brandao2019quantum} are reported as scaling with $O(\sqrt{m} \ \textrm{poly}(\log(m), \log n))$ \cite{brandao2019quantum} for $m$ constraints in $n \times n$ matrices,
this requires the diameter of the convex set to be independent of the dimension, while the dependence is quadratic, in general. Furthermore, none of these algorithms have been implemented in an actual quantum device, or its simulator. 

Here, we consider another method for solving SDPs, which can be run in part on the quantum computer. In particular, we ``quantize'' the so-called randomized cutting plane (RCP) method.
The cutting-plane methods \cite{grotschel2012geometric} have produced a variety of classical theoretical guarantees, as surveyed in 
Table~\ref{tab:cutting_plane_method},  including the first polynomial-time algorithm for linear programming, but yielded little in terms of practical  implementations in classical computers.
This is because a certain sub-routine, known as the boundary oracle, is classically almost as demanding as the original problem.
We show how to leverage quantum speed-up of an eigensolver in speeding up the RCP method. 
Furthermore, we demonstrate an implementation of the RCP method, which is very robust to noise in the boundary oracle. The robustness to noise may make RCP suitable for use even on noisy quantum devices, which are available within the foreseeable future. 

We formalize the problem and discuss the related work in more detail in Sec.~\ref{sec:prelim}.
We present our main result in Sec.~\ref{sec:main}.
Finally, we discuss our numerical results in Sec.~\ref{sec:results}.

%In each iteration, cutting plane methods query the \emph{separation} oracle which returns a hyperplane separating the query point from $K$. Since Khachiyan's breakthrough result \cite{k80} on the \emph{ellipsoid method}, cutting plane methods have played a central role in theoretical computer science, showing that a plethora of problems from diverse areas admit polynomial time algorithms. Early prominent examples include linear programming and submodular function minimization.

%The fastest algorithm before their work was Vaidya's algorithm \cite{v89}, which runs in $O(n^{\omega+1})$ time. Here $\omega<2.373$ is the exponent of matrix multiplication \cite{cw87,w12,ds13,l14}.
%\paragraph{Importance of $\log (\kappa)$ vs $\log^3 (\kappa)$.} For many other problems including linear programming and market equilibrium computation, $\epsilon$ must be taken as $1/M^{O(n)}$ resulting in an additional factor $n^{2}$ between $\log (\kappa)$ and $\log^{3} (\kappa)$. For these applications, LSW is slower than Vaidy's by a factor of $\tilde{O}(n^{4-\omega})$. This raises a natural question:\emph{ is there a cutting plane method that simultaneously runs in $O(n\log (\kappa))$ calls and $O(n^{3}\log^{O(1)} (n) \log (\kappa))$ time? That is, can we achieve the best of both worlds of Vaidya's and LSW methods in terms of the dependence on $n$ and $\kappa$?}

\begin{table*}[t!]
\caption{An overview of recently proposed quantum algorithms for convex optimization on quantum computers, sorted by their appearance in arxiv (listed under Year). In the upper bounds, we drop polylogarithmic
terms.}
\label{tab:cvxopt}
\begin{tabularx}{\textwidth}{@{\extracolsep{\fill}}lllXl}
\hline
    Reference & Year & Algorithm  & Complexity & Complexity ref. \\ \hline
\cite{8104077} & 2016 & Multiplicative weights update &  $\sqrt{mn}s^2 (Rr/\epsilon)^{32}$ & Cor. 17 \\
\cite{8104076,van2020quantum,van2020phd} & 2017 & Multiplicative weights update &  $\sqrt{mn}s^2 (Rr/\epsilon)^{8}$ & Thm. 1 of \cite{van2020quantum} \\ 
\cite{brandao2019quantum} & 2017 & Multiplicative weights update & $(\sqrt{m}/\epsilon^{10} +\sqrt{n}/\epsilon^{12} ) s^2 \textrm{poly}(Rr/\epsilon)$ & Cor. 6 
\\
\cite{van2019improvements,van2020phd} & 2018 & Multiplicative weights update & $((\sqrt{m} + \sqrt{n}(Rr/\epsilon)) s(Rr/\epsilon)^{4} \approx Rrs(\sqrt{m}/\epsilon^4 + \sqrt{n}/\epsilon^5) $ & Thm. 17 of \cite{van2019improvements} \\ % nms(rR/eps)^4 + ns(rR/eps)^7 
\cite{kerenidis2018quantum} & 2018 & Interior-point method &  $(n^{\omega} + \kappa n^2/\delta^2) \log{1/\epsilon}$, with $\kappa \to \infty$ & Cor 7.7  \\
\cite{van2020oracles} & 2018 & Subgradient & Not given &  \\
\cite{chakrabarti2020quantum} & 2018 & Subgradient & Not given &  \\
\cite{mohammadisiahroudi2021efficient} & 2021 & Interior-point method & Not given \\
\cite{augustino2021inexact} & 2021 & Interior-point method & $(n^{2.5} / \epsilon + \kappa \delta^2)$ , with $\kappa \to \infty$ & Sec. 5.3 \\
\cite{bharti2021nisq} & 2021 & Hybrid, q. preprocessing & Not given & \\
\hline
\end{tabularx}
\end{table*}

\begin{table}[!t]
    \caption{ A short history of classical algorithms based on the cutting-plane method (focussing on the feasibility). In the upper bounds (listed under Complexity), we drop polylogarithmic
terms and let $\rho = nR/\epsilon$.}\label{tab:cutting_plane_method} 
\centering
    \begin{tabularx}{\columnwidth}{@{\extracolsep{\fill}}lllX}
    \hline
    Ref. & Year & Algorithm  & Complexity \\ \hline
     \cite{s77,yn76,k80} & 1979 & Ellipsoid method & $n^2  \SO  \log (\rho) + n^4 \log (\rho)$ \\ 
     \cite{kte88,nn89} & 1988 & Inscribed ellipsoid & $n \SO \log (\rho) + (n \log (\rho))^{4.5}$ \\  
     \cite{v89} & 1989 & Volumetric center & $n \SO \log (\rho) + n^{\omega+1} \log (\rho)$ \\ 
     \cite{av95} & 1995 & Analytic center & $n \SO \log^2 (\rho) + n^{\omega+1} \log^2 (\rho) + ( n \log (\rho) )^{ 2 + \omega / 2 } $ \\ 
     \cite{bv02} & 2004 & Random walk & $ n \SO \log (\rho) + n^7 \log (\rho)$ \\ 
     \cite{Calafiore2007} & 2007 & Random walk & probabilistic analyses \\
     \cite{4739188,dsp10} & 2010 & Random walk & probabilistic analyses \\ 
     \cite{lsw15} & 2015 & Hybrid center & $n \SO \log (\rho) + n^3 \log^3 (\rho)$ \\ 
    \cite{jiang2020improved} & 2019 & Volumetric center & $n \SO \log (\rho) + n^3 \log (\rho) $ \\ 
    \hline
    \end{tabularx}
\end{table}

%Cutting plane::Deterministic center-of-gravity (DCG)::Exterior-point::Analytic center::Randomized cutting plane (RCP)::Interior-point

\section{Preliminaries and Related Work}
\label{sec:prelim}

\subsection{Convex Optimization}

We consider a convex constrained optimization problem \cite{boyd2004convex} of the form
\begin{equation}\label{eq:1.1}
\begin{aligned}
\min_{x \in \mathcal{X}} c^T &x 
\end{aligned}
\end{equation}
where $\mathcal{X}\subset \mathbb{R}^n$ is a convex compact set with non-empty interior (``convex body'') and 
 $c \in \mathbb{R}^n$ defines a linear objective.
 The linear cost function is taken without loss of generality, since any convex constrained optimization problem can be reduced to this form,
 \footnote{Indeed, if $f(x)$ is convex,
\begin{equation*}
\begin{aligned}
 f(x) &\to \min\\
x &\in \mathcal{X},
\end{aligned}
\end{equation*}
by introducing a slack variable $t$, we obtain an equivalent optimization problem of the form (\ref{eq:1.1})
\begin{equation*}
\begin{aligned}
 & \min t\\
x &\in \mathcal{X},\, f(x) - t \leq 0
\end{aligned}
\end{equation*}}.
Furthermore, we assume that there exist two Euclidean balls $\mathcal{B}_r$ and $\mathcal{B}_R$ of radii $0<r<R$, such that $\mathcal{B}_r\subseteq\mathcal{X}\subseteq\mathcal{B}_R$. 

A particularly important class of constrained convex optimization problems are semidefinite programs \cite{anjos2011handbook}:
\begin{align}
\operatorname{inf}\ \langle C,X \rangle  
\textrm{ s.t. }  \mathcal{A}X=b, \tag{SDP} \label{eq:SDP}
X\succeq_{\mathcal{K}}0 
\end{align}
where cone $\mathcal{K}$ is the cone of positive semidefinite symmetric $n\times n$ matrices  $\mathcal{S}_{+}^{n}$, i.e., $\{X=X^{\intercal}\in\mathbb{R}^{n \times n} |\ X \text{ is positive semidefinite}\}$, and $\mathcal{A} \colon \mathcal{S}^{n} \to \mathbb{R}^{m}$ is a linear operator between $\mathcal{S}_{+}^{n}$ and $\mathbb{R}^{m}$:
\begin{align*}
  X &\mapsto \begin{pmatrix}
  \langle A_{1},X\rangle\\
  \dots\\
  \langle A_{m},X\rangle
  \end{pmatrix}.
\end{align*}
\footnote{
We still assume there are  two Euclidean balls $\mathcal{B}_r$ and $\mathcal{B}_R$ inscribed and outscribed to the feasible set
of \ref{eq:SDP}, which is known as the spectrahedron.
Radius $R$ of the ball $\mathcal{B}_R$ above can also be seen as an upper bound on the trace of an optimal primal solution of an SDP.
Notice that in the case of a general SDP, parameters $r$ and $R$ are not constants independent of dimension, but do grow with the dimension.}.
%can be quadratic, in the worst case. }
This is a proper generalization of linear programming (LP), second-order cone programming (SOCP), and convex cases of quadratically-constrained quadratic programming (QCQP), which underlie much of operations research.
SDPs also have extensive applications in combinatorial optimization, control engineering, (quantum) information theory, machine learning \cite{majumdar2020recent}, and statistics.
Under the Unique Games Conjecture \cite{khot2007optimal,KhotSurvey,khot2015unique}, randomized rounding  \cite{Raghavan1987} of SDPs obtains the best possible polynomial-time classical algorithms for a variety of problems.

% JAKUB TO EXPLAIN THAT THERE ARE MANY OPTIONS

Correspondingly, there have been proposed many classical algorithms for solving constrained convex optimization problems.
In summary, SDPs can be classically approximated to any precision in polynomial time.
Presently, both the best theoretical bounds \cite{jiang2020faster} and the best practical solvers \cite{mosek2020}
employ interior-point methods.
At least in theoretical models of computation \cite{blum2012complexity} where a real-number arithmetic operation can be performed in unit time,
there are classical upper bounds \cite{jiang2020faster} on the run time of the form  
${\mathcal{\tilde O}}(\sqrt{n}(mn^2+m^{\omega}+n^{\omega})\log(1/\epsilon))$,
where $\mathcal{O}(\cdot)$ indicates the Bachmann–Landau notation, 
tilde in $\mathcal{\tilde O}(\cdot)$ indicates that we drop the polylogarithmic terms, 
$n$ is the dimension of the problem, $\omega \in [2, 2.373)$ is the exponent for matrix multiplication, $m$ is the number of constraints, and  
$\nz$ is the maximal number of non-zero entries per row of the input matrices.
\footnotetext[1]{Notice that the analysis of  \cite{porkolab1997complexity} shows the situation is less trivial in the Turing machine, and one may need to consider the dimension or the number of constraints constant.}
For certain smooth instances with sufficient curvature, there are first-order methods \cite{yurtsever2019scalable}, which can be faster still.
Nevertheless, many of the instances of SDPs encountered in the real-world that are not possible to solve using classical computers in practice. 

To introduce the randomized cutting plane method, it is useful to consider perhaps the single most simple optimization algorithm possible: in each iteration, cut a convex body in two pars at its center of gravity, and repeat with the part that yields better objective function. See \textsc{Algorithm} $1$ for an outline.
This is known as the Deterministic Center-of-Gravity (DCG) algorithm and it has been proposed by Levine \cite{levin1965algorithm} and, independently, Newman \cite{newman1965location} in 1965.

\subsection{Deterministic Center-of-Gravity (DCG) Algorithm}
For a convex body $\calX \subset \mathbb{R}^n$, define its \emph{center of gravity} as 
$$\cg(\calX) = \frac{\int_{\calX}x\mathrm{d}x}{\int_{\calX}\mathrm{d}x}.$$

\begin{figure}[t]
    \begin{tabularx}{245pt}{l}
    \hline
     \textsc{Algorithm} $1$: DCG, cf. \cite{levin1965algorithm,newman1965location} \\
     \textbf{Input:} $\mathcal{X}$\\
     \textbf{Output:} $z_k$\\
       1: $k = 0$, $\mathcal{X}_k = \mathcal{X}$\\
       2: \textbf{repeat}: \\
       3: $z_k = \cg(\calX_k)$\\
       4: $\mathcal{X}_{k+1} = \{x\in\mathcal{X}_k:c^T(x-z_k)\leq 0\}$\\
       5: $k = k+1$ \\
       6: \textbf{until} a stopping criterion is satisfied.\\[0.3ex]
     \hline
    \end{tabularx}
\end{figure}

\begin{proposition}[Gr\"{u}nbaum \cite{grunbaum1960partitions}, \footnote{Note that some literature \cite{4739188,dsp10} restates the lemma incorrectly, with $\frac{1}{n}$ instead of $n$.}] Let $\mathcal{X} \subset \mathbb{R}^n$ be a convex body, and let $x_G = cg(\mathcal{X})$ be its center of gravity. Consider any hyperplane $\mathcal{H} = \{x\in\mathbb{R}^n: c^T (x-x_G) = 0\}$ passing through $x_G$. This hyperplane divides the set $\calX$ into two subsets
\begin{align*}
    \mathcal{X}_1 &= \{x\in \mathcal{X}: c^T x > c^T x_G\}, \\
    \mathcal{X}_2 &= \{x\in \mathcal{X}: c^T x \leq c^T x_G\}.
\end{align*}
Then, for $i=1,2$:
\begin{equation}
    \vol(\mathcal{X}_i) \leq \left(1-\left(\frac{n}{n+1}\right)^n\right)\vol(\mathcal{X}) \leq \left(1-\frac{1}{e}\right)\vol(\mathcal{X}).
\end{equation}
\end{proposition}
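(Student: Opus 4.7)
The bound is affinely invariant (invertible affine maps preserve ratios of volumes and commute with the centroid operation), so I would first apply such a transformation to place $x_G$ at the origin and rotate so that $c$ is parallel to $e_1$, without loss of generality bounding $\vol(\mathcal{X}_2)$. Introducing the slice function $A(t) := \vol_{n-1}(\mathcal{X} \cap \{x_1 = t\})$ with support $[-\alpha,\beta]$, the total volume becomes $\int_{-\alpha}^{\beta} A(t)\,dt$, the lower half is $\int_{-\alpha}^{0} A(t)\,dt$, and the centroid condition reads $\int_{-\alpha}^{\beta} t\,A(t)\,dt = 0$.

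The second step is to invoke Brunn's theorem, a standard consequence of the Brunn--Minkowski inequality: the function $g(t) := A(t)^{1/(n-1)}$ is concave on $[-\alpha,\beta]$. This reduces the problem to a purely one-dimensional extremal problem, namely maximizing the ratio $\int_{-\alpha}^{0} g^{n-1}\,dt \big/ \int_{-\alpha}^{\beta} g^{n-1}\,dt$ over all nonnegative concave functions $g$ on $[-\alpha,\beta]$ satisfying the moment constraint $\int_{-\alpha}^{\beta} t\,g(t)^{n-1}\,dt = 0$.

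The main obstacle is to show that this one-dimensional maximum is attained when $g$ is affine, so that the extremal body is a cone whose apex lies on a supporting hyperplane parallel to $\mathcal{H}$. The standard tool is a \emph{tent-function replacement}: on each side of $0$, replace $g$ by the affine function pinned to the endpoint values, and show via the concavity of $g$ together with the convexity of $u \mapsto u^{n-1}$ (for $n \geq 2$), combined with careful bookkeeping of the area and first moment on each side, that this substitution can only increase the ratio. The moment constraint then eliminates one degree of freedom, and the remaining scalar optimization over endpoint heights places the maximum exactly at the two-sided cone.

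Once the extremal shape is identified, the bound follows by a direct computation. Taking $A(t) = C(t+\alpha)^{n-1}$ on $[-\alpha,\beta]$, the centroid constraint, after the substitution $s = t+\alpha$, reduces to $\alpha/n = (\alpha+\beta)/(n+1)$, i.e., $\alpha = n\beta$. Therefore
\begin{equation*}
\frac{\vol(\mathcal{X}_2)}{\vol(\mathcal{X})} = \frac{\alpha^n}{(\alpha+\beta)^n} = \left(\frac{n}{n+1}\right)^n,
\end{equation*}
so $\vol(\mathcal{X}_1) = \vol(\mathcal{X}) - \vol(\mathcal{X}_2) \leq \bigl(1 - (n/(n+1))^n\bigr)\vol(\mathcal{X})$, with a symmetric bound for $\vol(\mathcal{X}_2)$ since $(n/(n+1))^n \leq 1/2$ for $n \geq 1$. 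The second inequality in the statement follows from the classical estimate $\bigl(1 - 1/(n+1)\bigr)^n \geq 1/e$, a consequence of the monotone convergence $(1-1/m)^m \nearrow 1/e$.
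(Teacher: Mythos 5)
The paper does not actually prove this proposition---it is quoted directly from Gr\"unbaum's 1960 paper---so the relevant comparison is with the standard proof, and with the closely analogous argument the paper \emph{does} carry out in Appendix~\ref{app:statistical} for Theorem~\ref{thm:1}, which uses the same toolkit (Brunn's concavity lemma, Schwarz rounding of the slices, comparison with a cone). Your overall architecture is the right one: affine normalization, Brunn's theorem to get concavity of $A(t)^{1/(n-1)}$, reduction to a one-dimensional extremal problem over nonnegative concave profiles subject to a first-moment constraint, and the exact computation for the cone. The cone computation itself is correct: $\alpha = n\beta$ from the centroid condition and the resulting ratio $(n/(n+1))^n$ check out.

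The genuine gap is the central extremality step, which you rightly flag as ``the main obstacle'' but do not close. The replacement you describe---substituting for $g$ the affine function pinned to its endpoint values on each side of $0$---produces a pointwise \emph{smaller} function (a concave function lies above its chords), so it shrinks both $\int_{-\alpha}^{0} g^{n-1}$ and $\int_{0}^{\beta} g^{n-1}$; it is neither clear nor argued that the ratio moves in the claimed direction, and the substitution destroys the moment constraint $\int t\,g(t)^{n-1}\,dt = 0$ that defines the admissible class, so the resulting profile no longer competes in the same extremal problem. The standard way to close this (Gr\"unbaum's, and in essence what the paper does when proving Theorem~\ref{thm:1}) is different: compare the rounded body with a cone of \emph{equal volume} positioned so that the two slice-area functions cross exactly once---concavity of $A^{1/(n-1)}$ against the linear profile of the cone yields the single-crossing property---and then a first-moment argument shows the cone's centroid is displaced in the favorable direction, so the half-volume of $\mathcal{X}$ at its own centroid dominates the half-volume of the cone at the cone's apex-side. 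Two smaller slips: your one-dimensional problem is posed as a maximization of the $\mathcal{X}_2$-fraction, but the cone you then compute (apex on the $\mathcal{X}_2$ side) is the \emph{minimizer} of that fraction, so the logic only closes after passing to the complement or replacing $c$ by $-c$; and the final inequality needs $\left(1-\tfrac{1}{n+1}\right)^{n} \geq 1/e$, which follows from $(1+1/n)^{n} < e$ (equivalently, $(1-1/m)^{m-1} \searrow 1/e$), not from the fact you cite, $(1-1/m)^{m} \nearrow 1/e$, which bounds the wrong quantity from the wrong side.
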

\begin{remark}
    Each step of the DCG algorithm guarantees that a given portion of the feasible set is cut out, i.e.,
\[
    \vol(\calX_{k+1}) \leq \left(1-\frac{1}{e}\right)\vol(\calX_k).
\]
Applying this inequality recursively, we obtain the volume inequality
\begin{equation}
    \vol(\calX_k) \leq \left(1-\frac{1}{e}\right)^k \vol(\calX_0) \approx (0.63)^k\vol(\calX_0),
\end{equation}
which proves that DCG has guaranteed geometric convergence in terms of \emph{volumes}.
\end{remark}

By applying Radon's theorem to the DCG algorithm,
 we obtain a reduction of \emph{cost-function values} at each step. Here:

\begin{proposition}[Radon, \cite{radon1921mengen}] Let $\calX \subset \mathbb{R}^n$ be a convex body and $x_G = cg(\mathcal{X})$ be its center of gravity. Denote by $\mathcal{H}$ an arbitrary $(n-1)$-dimensional hyperplane through $x_G$, and let $\mathcal{H}_1$ and $\mathcal{H}_2$ be the two hyperplane supporting $\calX$ and parallel to $\mathcal{H}$. Denote by
\[
    r(\mathcal{H}) = \frac{\min\{\dist(\mathcal{H}, \mathcal{H}_1), \dist(\mathcal{H}, \mathcal{H}_2)\}}{\max\{\dist(\mathcal{H}, \mathcal{H}_1), \dist(\mathcal{H}, \mathcal{H}_2)\}}
\]
the ratio of the distances from $\mathcal{H}$ to $\mathcal{H}_1$ and $\mathcal{H}_2$, respectively. Then
\[
    \min_{\mathcal{H}} r(\mathcal{H}) \geq \frac{1}{n}.
\]
\end{proposition}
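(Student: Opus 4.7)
The plan is to reduce the claim to a one-dimensional moment bound via the Brunn-Minkowski inequality, then close with Chebyshev's sum inequality. Choose coordinates so that $x_G=0$ and $\mathcal{H}=\{x:x_1=0\}$; reflecting $x_1\mapsto -x_1$ if necessary, arrange $\mathcal{H}_1=\{x_1=-a\}$ and $\mathcal{H}_2=\{x_1=b\}$ with $0<a\leq b$, so that $r(\mathcal{H})=a/b$ and the assertion $r(\mathcal{H})\geq 1/n$ is equivalent to $b\leq na$. Introduce the cross-sectional area $\phi(t):=\vol_{n-1}(\calX\cap\{x_1=t\})$ for $t\in[-a,b]$. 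Two ingredients drive the proof: the Brunn-Minkowski inequality makes $\phi^{1/(n-1)}$ concave on $[-a,b]$, and the centroid hypothesis gives the one-dimensional moment condition $\int_{-a}^{b}t\,\phi(t)\,dt=0$.

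Rescale via $u:=(b-t)/(b+a)\in[0,1]$ and set $g(u):=\phi(t(u))^{1/(n-1)}$, which is nonnegative and concave on $[0,1]$. A direct change of variables shows that $\int_{-a}^{b}t\phi\,dt=0$ is equivalent to
\[
\frac{\int_{0}^{1}u\,g(u)^{n-1}\,du}{\int_{0}^{1}g(u)^{n-1}\,du}=\frac{b}{b+a},
\]
and since $b/(b+a)\leq n/(n+1)$ amounts to $b\leq na$, the whole theorem reduces to the intrinsic bound
\[
\frac{\int_{0}^{1}u\,g(u)^{n-1}\,du}{\int_{0}^{1}g(u)^{n-1}\,du}\leq\frac{n}{n+1}
\qquad\text{for every concave } g:[0,1]\to[0,\infty).
\]

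To prove the one-dimensional bound, observe that concavity together with $g\geq 0$ forces $h(u):=g(u)/u$ to be non-increasing on $(0,1]$: indeed, for $0<u_{1}<u_{2}\leq 1$ one has $g(u_{1})\geq(u_{1}/u_{2})g(u_{2})+(1-u_{1}/u_{2})g(0)\geq(u_{1}/u_{2})g(u_{2})$. Writing $g^{n-1}=u^{n-1}h^{n-1}$, the ratio above becomes the expected value of $u$ under the positive measure $u^{n-1}h(u)^{n-1}du$. Chebyshev's integral sum inequality, applied with the non-decreasing function $u$, the non-increasing function $h^{n-1}$, and the base measure $u^{n-1}du$ on $[0,1]$, yields exactly this bound, with equality when $h$ is constant, i.e., when $g$ is linear (geometrically, the cone with apex at $(b,0,\ldots,0)$ and base at $x_{1}=-a$).

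The main obstacle is the extremal one-dimensional bound. The Brunn-Minkowski reduction and the affine rescaling to $[0,1]$ are routine once the right substitution is spotted, but identifying Chebyshev as the correct tool and recognizing that concavity of $g$ with $g\geq 0$ is exactly the property that makes $g(u)/u$ monotone are the non-obvious steps that do the real work.
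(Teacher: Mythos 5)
Your proof is correct. Note first that the paper does not actually prove this proposition --- it is stated with a citation to Radon and used only to derive the contraction inequality for the cost values --- so there is no in-paper argument to compare against line by line. Your reduction is sound: the normalization to $\mathcal{H}=\{x_1=0\}$ with $a\le b$, the Brunn--Minkowski concavity of $\phi^{1/(n-1)}$ (which is exactly the paper's Lemma~\ref{lem:1} in Appendix~\ref{app:statistical}), the translation of the centroid condition into $\int_0^1 u\,g^{n-1}\,du\big/\int_0^1 g^{n-1}\,du = b/(b+a)$, and the equivalence of $b\le na$ with $b/(b+a)\le n/(n+1)$ all check out. The closing step is the one that needed care, and it works: concavity with $g\ge 0$ does force $g(u)/u$ to be non-increasing, and Chebyshev's integral inequality for oppositely ordered functions against the base measure $u^{n-1}\,du$ gives $E_\mu[u\,h^{n-1}]\le E_\mu[u]\,E_\mu[h^{n-1}]$ with $E_\mu[u]=n/(n+1)$, with equality precisely for the cone. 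It is worth observing that your route is the ``analytic'' cousin of the geometric comparison the paper does use elsewhere: in the proof of Theorem~\ref{thm:1} the authors symmetrize the body and compare volumes of super-level sets directly against a cone of equal volume, whereas you encode the same cone extremality through a correlation inequality; your version has the advantage of producing the sharp constant and the equality case in one stroke, at the cost of requiring the reader to spot the monotonicity of $g(u)/u$. The only cosmetic gap is the degenerate case $n=1$, where $\phi^{1/(n-1)}$ is not defined, but there the claim reduces to the trivial fact that the centroid of an interval is its midpoint.
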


Specifically:

\begin{equation}\label{eq:2.3}
    f_{k+1} - f^* \leq \frac{n}{n+1}(f_{k}-f^*)\tag{2.3}.
\end{equation}
Indeed, let
\begin{align*}
    \mathcal{H}_1 &= \{x\in\mathcal{X}: c^T x = c^T x_k\}\\
    \mathcal{H} &= \{x\in\mathcal{X}: c^T x = c^T x_{k+1}\}\\
    \mathcal{H}_2 &= \{x\in\mathcal{X}: c^T x = c^T x^*\},
\end{align*}
then
\begin{align*}
    \dist(\mathcal{H},\mathcal{H}_1) & = \frac{|c^T x_k - c^T x_{k+1}|}{\norm{c}} = \frac{c^T x_k - c^T x_{k+1}}{\norm{c}} \\ & = \frac{f_k - f_{k+1}}{\norm{c}}
\end{align*}
and
    $$\dist(\mathcal{H},\mathcal{H}_2) = \frac{f_{k+1} - f^*}{\norm{c}},$$
where we used the fact that $f_k \geq f_{k+1}$ and $f_{k+1} \geq f^*$.
By Radon's theorem,
$$\frac{f_k-f_{k+1}}{f_{k+1}-f^*} = \frac{\dist(\mathcal{H},\mathcal{H}_1)}{\dist(\mathcal{H},\mathcal{H}_2)} \geq r(\mathcal{H}) \geq  \min_{\mathcal{H}} r(\mathcal{H}) \geq \frac{1}{n},$$
$\Rightarrow$
$$\frac{f_k-f_{k+1}}{f_{k+1}-f^*}+1 \geq \frac{1}{n} + 1$$
$\Rightarrow$
$$f_{k+1} - f^* \leq \frac{n}{n+1}(f_{k}-f^*).$$

Thereby, we obtain the iteration complexity:

\begin{proposition}[Rate of convergence of DCG]\label{lem:2.1}
Define $D=f_0 - f^*$. Then the DCG algorithm computes an $\alpha$-optimal solution (i.e., such that $f_k - f^* \leq \alpha$) in a number of steps bounded as
\[
    k = \ceil*{\frac{\ln \frac{D}{\alpha}}{\ln \frac{n+1}{n}}} = \mathcal{O} \left(n\ln\frac{D}{\alpha}\right).
\]
\end{proposition}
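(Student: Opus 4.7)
The plan is to unroll the one-step contraction of the optimality gap established in equation~(2.3) above, then convert the resulting exponential decay into an iteration count by taking logarithms. Since the contraction $f_{k+1} - f^\ast \leq \tfrac{n}{n+1}(f_k - f^\ast)$ has already been derived in the discussion preceding the statement, essentially no new geometric content is required.

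First, I would iterate (2.3) from the initial step: a straightforward induction on $k$ gives
\[
    f_k - f^\ast \;\leq\; \left(\frac{n}{n+1}\right)^k (f_0 - f^\ast) \;=\; \left(\frac{n}{n+1}\right)^k D.
\]
Requiring the right-hand side to fall below $\alpha$ yields the sufficient condition $k \,\ln \tfrac{n+1}{n} \geq \ln \tfrac{D}{\alpha}$, which rearranges to $k \geq \frac{\ln(D/\alpha)}{\ln((n+1)/n)}$. Taking the ceiling and picking the smallest such integer produces the first equality in the claim, $k = \ceil*{\frac{\ln (D/\alpha)}{\ln ((n+1)/n)}}$.

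To pass to the asymptotic $\mathcal{O}\!\left(n \ln(D/\alpha)\right)$ form, I would invoke the elementary inequality $\ln(1 + 1/n) \geq \frac{1}{n+1}$, a standard consequence of $\ln(1+x) \geq \frac{x}{1+x}$ for $x \geq 0$. This yields $\frac{1}{\ln((n+1)/n)} \leq n+1$, so that $k \leq (n+1)\ln(D/\alpha) + 1 = \mathcal{O}(n\ln(D/\alpha))$, which is precisely the second equality in the statement.

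No step here is genuinely difficult; the entire argument rests on (2.3), which has already been justified in the text via Radon's theorem applied to the hyperplane through the current center of gravity, together with the monotonicity $f_0 \geq f_1 \geq \cdots \geq f^\ast$ that is built into the DCG update $\mathcal{X}_{k+1} = \{x \in \mathcal{X}_k : c^T(x - z_k) \leq 0\}$. The only mild subtlety is the logarithmic estimate in the final step: the superficially appealing bound $\ln(1 + 1/n) \geq 1/n$ is false and must be replaced by the correct $\ln(1 + 1/n) \geq 1/(n+1)$, which is why the asymptotic constant is $n+1$ rather than $n$ — an irrelevant distinction once absorbed into $\mathcal{O}(\cdot)$.
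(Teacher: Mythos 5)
Your proof is correct and follows exactly the route the paper intends: the proposition is stated as a direct consequence of the contraction (2.3), which the text derives in full via Radon's theorem, and your unrolling of that recursion plus the logarithmic estimate $\ln(1+1/n)\geq 1/(n+1)$ is the standard (and only natural) way to complete the argument. Your remark that $\ln(1+1/n)\geq 1/n$ would be false is a correct and worthwhile caution.
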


This iteration complexity is essentially the same for all cutting-plane methods since 1988, as surveyed in Table~\ref{tab:cutting_plane_method}.
We refer to \cite{Bubeck2015} for an in-depth introduction.
Notice, however, that computing the deterministic center of gravity (``per-step complexity'') is \#P-hard even for 0-1 polytopes \cite{rademacher2007approximating}.
One would hence like to consider some alternative sub-routine, while preserving the same rate of convergence. 

\subsection{An RCP algorithm}

Over the past two decades, there have been developed cutting-plane algorithms \cite[e.g.]{bv02,dsp10} that replace the 
computing of the center of gravity of a convex body with 
sampling points uniformly at random from the convex body. 
An outline of such a randomized cutting-plane method is presented in \textsc{Algorithm} $2$.

\begin{table*}[t]
		\caption{An overview of geometric random walks: Their mixing times and upper bounds on the per-step complexity of sampling uniform distribution over the spectrahedron in the classical implementations by Chalkis et al. \cite{chalkis2020efficient}.}
	\centering
	\begin{tabularx}{510pt}{@{\extracolsep{\fill}}lllll}\hline
	\toprule
		Reference & Year & Random walk & Mixing time & Per-step complexity for \ref{eq:SDP}  \\ \hline 
	\cite{smith1984efficient,lovasz1999hit} & 1984 & 	Hit and Run 	 & Fast \cite{lovasz1999hit}  &  $\mathcal{O}(m^{2.697} + m \lg{L} + nm^2)$ \\
	\cite{smith1984efficient} & 1984 &	Coordinate-directions hit and run & Unknown	&  $\mathcal{O}(m^{2.697} + m \lg{L} + m^2)$ \\
	\cite{Polyak} & 2014 &	Billiard walk 	& Unknown & $\widetilde{\mathcal{O}}(\rho(m^{2.697} + m \lg{L} + nm^2))$ \\ 
	\cite{DUANE1987,Afshar2015,chevallier2018} & 2015 &	Hamiltonian Monte Carlo
with reflections \quad & Unknown & $\widetilde{\mathcal{O}}(\rho((dm)^{2.697} + md \lg{L} + dnm^2))$ \\ 
\cite{chevallier} & 2019 & Wang-Landau & Fast \cite{chevallier} & Unknown \\
\hline
	\end{tabularx}
		\label{tab:walks}
\end{table*}

\begin{figure}[ht]
    \begin{tabularx}{245pt}{l}
    \hline
     \textsc{Algorithm} $2$: Randomized cutting plane   \cite{bv02,calafiore2004random,4739188,polyak2006d}\\
    %\midrule
     \textbf{Input:} $\mathcal{X}$\\
     \textbf{Output:} $z_k$\\
       1: $k = 0$, $\mathcal{X}_k = \mathcal{X}$\\
       2: \textbf{repeat}: \\
       3: generate $N_k$ uniformly distributed random samples \\ \quad \quad  in $\calX_k$, $\{x^{(1)},\ldots,x^{(N_k)}\}$, e.g., using \textsc{Algorithm} $3$ \\
       4: $z_k = \argmin_{x\in\{x^{(1)},\ldots,x^{(N_k)}\}}c^T x$\\
       5: $\mathcal{X}_{k+1} = \{x\in\mathcal{X}_k:c^T(x-z_k)\leq 0\}$\\
       6: $k = k+1$ \\
       7: \textbf{until} a stopping rule is satisfied.\\[0.3ex]
     \hline
    \end{tabularx}
\end{figure}

The uniform sampling is non-trivial, but a breakthrough result of \cite{lovasz2006fast} showed that it is possible using certain rapidly-mixing geometric random walks \cite{vempala2005geometric}. 
An overview of the geometric random walks is presented in Table~\ref{tab:walks}.
For any such random walk, one needs to provide one or more geometric subroutines, 
such as the 
test of membership of a point inside the set, a surface separating a point from the set,
etc. Several standard subroutines are beautifully surveyed in Chapter 3 of \cite{van2020phd}.
Our focus in this work will be on the so-called {\em Random Directions Hit and Run} random walk~\cite{smith1984efficient},
wherein the key subroutine is the intersection of a line (or curve, more generally) with the
boundary of the feasible set.
This subroutine is commonly known as the boundary oracle (\texttt{BO}).
See  \textsc{Algorithm} $3$ for an overview.

In the Supplementary Material, we present
%\begin{itemize}
 some background material concerning the statistical properties of the empirical minimum over a convex body 
 in Appendix \ref{app:statistical}.
 In Appendix \ref{sec:convergencerate}, we present an iteration complexity of the overall procedure, as captured in \textsc{Algorithm} $4$. In particular, we fix minor issues of previous analyses, especially those of Dabbene et al. \cite{4739188,dsp10}.
 In Appendix \ref{sec:impl}, we provide the full pseudo code of the algorithms, specialized to SDPs.
%\end{itemize}
We note that the pseudocode and the bounds on the iteration complexity remain the same, independent of whether the boundary oracle is run classically or quantumly. 

%For a wide range of problems, the \emph{boundary oracle} (BO) can be %formulated analytically. 
%How do we choose the mixing time (or walk length) $M$? According to %\cite{ef}, we may choose $M=\mathcal{O}(n)$. Jakub pointed out that %\cite{ef} is probably not applicable to our problem, so we need further %investigation about this.
%\end{remark}

\begin{figure}[t]
    \begin{tabularx}{245pt}{l}
    \hline
     \textsc{Algorithm} $3$: Hit-and-run (\texttt{H\&R}), cf. \cite{smith1984efficient,lovasz1999hit}\\
    %\midrule
     \textbf{Input:} $\mathcal{X}$, $x_0 \in \calX$, $M$ (mixing time)\\
     \textbf{Output:} random point $x\in\calX$\\
       1: $y^{(0)} = x_0$\\
       2: \textbf{for} $i=0$ \textbf{to} $M-1$ \textbf{do}:\\
       3: generate a uniformly distributed random direction $v\in\mathbb{R}^n$\\
       4: $\{\underline{x}, \bar{x}\} = \texttt{BO}(\calX,y^{(i)},v)$\\
       5: generate a uniform point $y^{(i+1)}$ in the segment $[\underline{x}, \bar{x}]$\\
       6: \textbf{end for}\\
       7: $x = y^{(M)}$\\[0.3ex]
     \hline
    \end{tabularx}
\end{figure}

\begin{figure}[t]
    \begin{tabularx}{245pt}{l}
    \hline
     \textsc{Algorithm} $4$: RCP with \texttt{H\&R}, cf. \cite{calafiore2004random,4739188,polyak2006d}\\
    %\midrule
     \textbf{Input:} $\mathcal{X}$, $x_0 \in \calX$, $M$\\
     \textbf{Output:} $z_k$\\
       1: $k = 0$, $\mathcal{X}_k = \mathcal{X}$\\
       2: \textbf{repeat}:\\
       3: \textbf{for} $j=1$ \textbf{to} $N$ \textbf{do}: $x^{(j)} = \texttt{H\&R}(x^{(j-1)},M)$; \textbf{end for}\\
       4: $z_k = \min_{x\in\{x^{(1)},\ldots,x^{(N)}\}}c^T x$\\
       5: $\mathcal{X}_{k+1} = \{x\in\mathcal{X}_k:c^T(x-z_k)\leq 0\}$\\
       6: $k = k+1$\\
       7: \textbf{until} a stopping rule is satisfied.\\[0.3ex]
     \hline
    \end{tabularx}
\end{figure}

%\begin{figure}[t]
%    \begin{tabularx}{510pt}{l}
%    \toprule
%     \textsc{Algorithm} $5$: Robust RCP with \texttt{H\&R} implementation\\
%    \midrule
%     \textbf{Input:} $\mathcal{X}$, $x_0 \in \calX$, $M$\\
%     \textbf{Output:} $z_k$\\
%       1: $k = 0$, $\mathcal{X}_k = \mathcal{X}$\\
%       2: \textbf{repeat}\\
%       3: \textbf{for} $j=1$ to $N$ \textbf{do} $x^{(j)} = \texttt{H\&R}(x^{(j-1)},M)$; \textbf{end for}\\
%       4: $z_k$ is the empirical mean of $x^{(j)}$\\
%       5: $\mathcal{X}_{k+1} = \{x\in\mathcal{X}_k:c^T(x-z_k)\leq 0\}$\\
%       6: $k = k+1$\\
%       7: \textbf{until} a stopping rule is satisfied.\\[0.3ex]
%     \bottomrule
%    \end{tabularx}
%\end{figure}

%\begin{remark}
%Note that in each iteration $k$, the algorithm generates $MN_k$ points. %In a practical implementation, we can  calculate the minimum of the objective function among all of these $MN_k$ points instead of only the $N_k$ points. 
%\end{remark}

\subsection{Boundary Oracle for Hit-and-Run Walks on the Feasible Set of an SDP}

Let us now consider the complexity of implementing a boundary oracle
for the {\em Random Directions Hit and Run} random walk~\cite{smith1984efficient}
for sampling uniformly at random from the spectrahedron \eqref{eq:SDP}.
For convenience, let us consider the dual of the semidefinite program \eqref{eq:SDP}, also known as the linear matrix inequality (LMI):
\begin{align}\label{eq:7.1}
\min c^T x \textrm{ s.t. }
F(x) &= F_0 + \sum_{i=1}^n x_i F_i \preceq 0,  \tag{LMI}
\end{align}
where $c\in\mathbb{R}^n$ and $F_i=F_i^T\in\mathbb{R}^{m\times m}$, $i=0,\ldots,n$, are known symmetric matrices. We then have the convex set
$$\calX = \calX_{\LMI} = \{x\in\mathbb{R}^n: F(x)\preceq 0\}.$$
We assume $\calX_{\LMI}$ is nonempty and bounded.

Given $y\in\calX_{\LMI}$ such that $F(y) \prec 0$ and a random direction $v\in \mathbb{R}^n$, how do we find the intersection points of the line $z=y+\lambda v$ and the boundary of $\calX_k$ at the $k$-th iteration? First, we have
$$F(y+\lambda v) = F(y) + \lambda(F(v)-F_0)\triangleq A+\lambda B,\ \lambda\in\mathbb{R}.$$
Next, we obtain the intersection points with the boundary of $\calX_{\LMI}$: $\underline{z}=y+\underline{\lambda}v$ and $\bar{z} = y+\bar{\lambda}v$ and test if $\underline{z},\,\bar{z}\in \calX_k = \{x\in\calX_{k-1}:c^T(x-z_{k-1})\leq 0\}$. If both points are in $\calX_k$, then $\{\underline{z},\,\bar{z}\}$ are the intersection points we need. Otherwise, only one of them  $\notin \calX_k$, so w.l.o.g. assume $\bar{z}\notin \calX_k$, and we need to find the intersection point between the line $z=y+\lambda v$ and the hyperplane $\{x\in\calX:c^T(x-z_{k-1}) = 0\}$, which can be easily obtained by solving for $\lambda$ in $c^T(y+\lambda v - z_{k-1}) = 0$. Let $\lambda^*$ denote the solution and let $\bar{z}' = y+\lambda^* v$. Then, $\{\underline{z},\bar{z}'\}$ are the desired intersection points. 

The work of \cite{calafiore2004random,4739188,polyak2006d} can be summarized as follows:

\begin{lemma}[Boundary oracle for LMIs, Lemma 6 in \cite{4739188}]\label{lem:7.1}
Let $A\prec 0$ and $B=B^T$. Then, the minimal and the maximal values of the parameters $\lambda \in \mathbb{R}$ retaining the negative definiteness of the matrix $A+\lambda B$ are given by
\[
\underline{\lambda} = \left\{\begin{array}{lr}
     \max\limits_{\lambda_i<0} \lambda_i&\\
     -\infty &\text{if all}\ \lambda_i > 0
\end{array}\right.
\]
and
\[
\overline{\lambda} = \left\{\begin{array}{lr}
     \min\limits_{\lambda_i>0} \lambda_i&\\
     +\infty &\text{if all}\ \lambda_i < 0,
\end{array}\right.
\]
where $\lambda_i$ are the generalized eigenvalues of the pair of matrices $(A,-B)$, i.e., $Av_i=-\lambda_iBv_i$.
\end{lemma}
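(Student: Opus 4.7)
My plan is to reduce the generalized eigenvalue condition to a standard symmetric eigenvalue problem via a congruence transform, then characterize the set of admissible $\lambda$ as an intersection of half-lines.

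\textbf{Step 1: Reduction via a congruence.} Since $A \prec 0$, the matrix $-A$ is positive definite and admits a Cholesky-type factorization $-A = LL^\top$ with $L$ nonsingular. Then
\begin{equation*}
A + \lambda B \;=\; -L\bigl(I - \lambda\, L^{-1} B L^{-\top}\bigr)L^\top.
\end{equation*}
Let $C := L^{-1} B L^{-\top}$, which is symmetric because $B = B^\top$. By Sylvester's law of inertia, the congruence by $L$ preserves signatures, so $A + \lambda B \prec 0$ if and only if $I - \lambda C \succ 0$.

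\textbf{Step 2: Diagonalization and the scalar inequalities.} Since $C$ is real symmetric, it admits eigenvalues $\mu_1,\ldots,\mu_m$. The condition $I - \lambda C \succ 0$ is then equivalent to the scalar conditions $1 - \lambda \mu_i > 0$ for every $i$. Splitting according to the sign of $\mu_i$:
\begin{align*}
\mu_i > 0 &\implies \lambda < 1/\mu_i, \\
\mu_i < 0 &\implies \lambda > 1/\mu_i, \\
\mu_i = 0 &\implies \text{no constraint on }\lambda.
\end{align*}

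\textbf{Step 3: Translating back to generalized eigenvalues.} The correspondence between the eigenvalues $\mu_i$ of $C$ and the generalized eigenvalues of $(A,-B)$ is the crux, and it is where I would have to be careful. If $C w = \mu\, w$ with $w \neq 0$, then setting $v = L^{-\top} w$ one computes $Av = -LL^\top v = -L w$ and $Bv = L C w = \mu L w$. Hence, whenever $\mu \neq 0$, $Av = -(1/\mu) B v$, so $\lambda_i := 1/\mu_i$ satisfies $A v = -\lambda_i B v$, i.e., $\lambda_i$ is a generalized eigenvalue of $(A,-B)$. Conversely, every finite generalized eigenvalue of $(A,-B)$ is obtained this way. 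The zero eigenvalues of $C$ correspond to ``infinite'' generalized eigenvalues, which impose no constraint, consistent with Step 2.

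\textbf{Step 4: Assembling the bounds.} Substituting $\mu_i = 1/\lambda_i$, the inequalities in Step 2 become $\lambda < \lambda_i$ for each $\lambda_i > 0$ and $\lambda > \lambda_i$ for each $\lambda_i < 0$. Intersecting these half-lines yields the largest feasible interval
\begin{equation*}
\max_{\lambda_i < 0} \lambda_i \;<\; \lambda \;<\; \min_{\lambda_i > 0} \lambda_i,
\end{equation*}
with the natural conventions $\underline{\lambda} = -\infty$ if no $\lambda_i$ is negative and $\overline{\lambda} = +\infty$ if no $\lambda_i$ is positive. Since $A \prec 0$ ensures $\lambda = 0$ lies in this interval, the interval is nonempty, and the values $\underline{\lambda}$ and $\overline{\lambda}$ stated in the lemma are exactly the endpoints. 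The main obstacle, as flagged in Step~3, is not the analysis itself but the careful bookkeeping between eigenvalues of $C$, generalized eigenvalues of $(A,-B)$, and the singular (``infinite eigenvalue'') case; once that bijection is made precise, the rest is straightforward.
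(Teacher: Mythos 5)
Your proof is correct. Note that the paper itself offers no proof of this lemma --- it is imported verbatim as Lemma~6 of Dabbene et al.\ \cite{4739188} --- so there is no in-paper argument to compare against; the usual proof in that literature runs by continuity (the largest eigenvalue of $A+\lambda B$ is negative at $\lambda=0$ and the matrix first becomes singular exactly when $\det(A+\lambda B)=0$, i.e.\ at a generalized eigenvalue of $(A,-B)$), whereas you take the congruence route: factor $-A=LL^\top$, reduce to $I-\lambda C\succ 0$ with $C=L^{-1}BL^{-\top}$ via Sylvester's law of inertia, and read off the half-line constraints $1-\lambda\mu_i>0$. Both are valid; yours has the advantage of being fully explicit about the bijection $\lambda_i=1/\mu_i$ and about the ``infinite eigenvalue'' case $\mu_i=0$ (null directions of $B$), which the continuity argument glosses over. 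The one loose end, which you flag yourself, is the converse in Step~3 --- that every finite generalized eigenvalue of $(A,-B)$ arises as $1/\mu_i$ --- but this is a one-line check: if $Av=-\lambda Bv$ with $v\neq 0$ then $\lambda\neq 0$ (since $A$ is nonsingular) and $w=L^\top v$ satisfies $Cw=(1/\lambda)w$. With that filled in, the endpoints of the feasible open interval around $\lambda=0$ are exactly the stated $\underline{\lambda}$ and $\overline{\lambda}$.
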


The semidefinite generalized eigenvalue problem \cite[Chapter 3]{Lucas2004} could be seen as a special cases of the polynomial eigenvalue problem \cite{coise2000backward,Guttel2007}. There, we wish to compute
$\lambda \in \R$ and $x \in \R^m$ satisfying
\begin{equation}
\label{eq:PEP} \tag{PEP}
(B_d \lambda^d + \cdots + B_1 \lambda + B_0) x = 0
\enspace ,
\end{equation}
where $B_i \in \R^{m \times m}$ are matrices, out of which 
$B_d$ and $B_0$ are invertible, and all could be seen as 
coefficients of a univariate matrix polynomial.

% calafiore2004random
Despite much recent progress in computational approach
to the polynomial eigenvalue problem \cite{coise2000backward,berhanu2005polynomial,Armentano2019,beltran2019real},
and effective computational geometry for surfaces
\cite{boissonnat2006effective} more broadly,
a classical implementation of the boundary oracle 
that would make the hit-and-run walk on the feasible set of 
\ref{eq:SDP} (or \ref{eq:7.1}) is still lacking. In particular, the present best classical run-time bound is:

\begin{lemma}[Chalkis et al., \cite{chalkis2020efficient}]
	\label{lem:pep}
	Consider a \ref{eq:PEP} of degree $d$, involving matrices of dimension
	$m \times m$, with integer elements of bitsize at most $\tau$. 
	There is a randomized algorithm for computing the eigenvalues and the eigenvectors of \ref{eq:PEP} up to precision
	$\epsilon = 2^{-L}$, in time $\widetilde{\mathcal{O}}((md)^{\omega+3} (m d)^3 \tau)$.
    The arithmetic complexity  is
	$\widetilde{\mathcal{O}}( \delta^{2.697} + m d \log(1/\epsilon))$.
\end{lemma}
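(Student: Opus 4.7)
The plan is to reduce the \ref{eq:PEP} to an ordinary eigenvalue problem by linearization, and then invoke a state-of-the-art randomized eigensolver on the resulting matrix. This is the standard route for polynomial eigenvalue problems, and both asymptotic bounds follow by careful bookkeeping of the linearization together with known bit and arithmetic complexities for dense eigenvalue computation.

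First, I would construct the first Frobenius companion linearization $L(\lambda) = \lambda X + Y$ of size $md \times md$, with $X = \blkdiag(I_m,\dots,I_m,B_d)$ and $Y$ the block companion matrix encoding the coefficients $B_0,\dots,B_{d-1}$ in the standard way. Because $B_d$ and $B_0$ are assumed invertible, $X$ is invertible and the pencil is regular: its $md$ generalized eigenvalues coincide with the eigenvalues of \ref{eq:PEP}, and every right eigenvector has the Krylov form $(x,\lambda x,\dots,\lambda^{d-1} x)^\top$, from which the \ref{eq:PEP} eigenvector $x$ is recovered by reading off the leading $m$-block. Setting $M := -X^{-1} Y$ reduces the task to computing the eigendecomposition of a single $md \times md$ matrix.

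Second, I would bound the bitsize of the entries of $M$. Using a Bareiss-style analysis of $X^{-1}$ (or equivalently of $B_d^{-1}$ together with the companion structure) one obtains that each entry of $M$ has bitsize $\widetilde{O}(d\tau)$. I would then invoke a fast randomized eigensolver for dense $N \times N$ matrices with $N = md$, of the flavour of \cite{Armentano2019,beltran2019real}, whose arithmetic cost is $\widetilde{O}(N^{\omega'})$ with $\omega' \le 2.697$ plus $\widetilde{O}(N \log(1/\epsilon))$ operations to attain precision $\epsilon$; this yields the arithmetic bound $\widetilde{O}(\delta^{2.697} + md \log(1/\epsilon))$ after the identification $\delta = md$. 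Multiplying the operation count by the word-size of the operands, controlled through the bitsize bound on $M$ and on the intermediate iterates (here $\tau$, $d$ and $m$ enter linearly or polynomially), produces the stated bit complexity $\widetilde{O}((md)^{\omega+6}\tau)$ after absorbing logarithmic factors into the tilde.

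The main obstacle, and the reason randomization is used, is controlling the condition number of the companion pencil and of the similarity transformations that drive the eigensolver. A worst-case \ref{eq:PEP} can have a very ill-conditioned linearization, which would blow up the required working precision and hence both complexities. The randomization introduces a small structured perturbation which, with high probability, gives a well-conditioned pencil whose eigenvalues are provably within $\epsilon$ of the original ones via Bauer--Fike/Gerschgorin-type bounds; this is the step where the exponents $\omega+3$ and the factor $\tau$ in the bit complexity truly come from, since one must amplify the nominal working precision by a polynomial factor in $md$ and $\tau$ to absorb the condition number. Once these perturbation and conditioning estimates are in hand, the two claimed bounds follow by substitution into the eigensolver's complexity.
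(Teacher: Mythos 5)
The paper does not actually prove Lemma~\ref{lem:pep}: it is imported verbatim from Chalkis et al.~\cite{chalkis2020efficient}, so there is no in-paper argument to compare yours against. Your route --- first Frobenius companion linearization of \ref{eq:PEP} to an $md\times md$ pencil, recovery of the \ref{eq:PEP} eigenvectors from the Krylov block structure, then a randomized dense eigensolver on $M=-X^{-1}Y$ --- is exactly the construction the paper itself sketches in its ``Companion Linearization'' subsection (the matrix $C_2$ with blocks $B_iB_d^{-1}$), and it is the route underlying the cited bound, so the overall approach is the right one.

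As a proof, however, the proposal has two concrete gaps. First, the bitsize bookkeeping is off: the entries of $M$ are governed by $B_d^{-1}$, whose entries are quotients of minors of an $m\times m$ integer matrix with entries of bitsize $\tau$; Hadamard's bound gives bitsize $\widetilde{\mathcal{O}}(m\tau)$ for these, not $\widetilde{\mathcal{O}}(d\tau)$, and this propagates into the Boolean complexity. Second, and more importantly, the two exponents that constitute the entire content of the lemma are asserted rather than derived: the $(md)^3\tau$ factor in the bit-complexity bound comes from explicit eigenvalue-separation (root-bound) estimates for the characteristic polynomial of the $md\times md$ companion matrix, which determine the working precision, and the exponent $2.697$ is a specific constant inherited from the particular randomized eigensolver being invoked --- writing that the exponents ``truly come from'' conditioning and Bauer--Fike-type perturbation arguments does not establish either number. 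You correctly notice that $\delta$ is undefined in the statement as printed and that $\delta=md$ is the only sensible reading; it is also worth flagging that the printed time bound $\widetilde{\mathcal{O}}((md)^{\omega+3}(md)^3\tau)=\widetilde{\mathcal{O}}((md)^{\omega+6}\tau)$ should be checked against the source rather than reverse-engineered, since a citation, not a derivation, is what the paper relies on here.
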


Unfortunately, this is not much easier than solving the original convex constrained optimization problem, which has  \cite{jiang2020faster} the 
arithmetic complexity
${\mathcal{\tilde O}}(\sqrt{n}(mn^2+m^{\omega}+n^{\omega})\log(1/\epsilon))$, and wherein important special cases \cite[e.g.]{van2020deterministic} can be solved in matrix-multiplication time. 

%Let us know consider \cite{Mackey2006} ...

\section{A Boundary Oracle via Quantum Eigensolvers}
\label{sec:main}

Our main result 
is a family of quantum algorithms for the boundary oracle for hit-and-run walks on the feasible set of an SDP, or rather its dual \eqref{eq:7.1}.
Therein, we transform the generalized eigenvalue problem to an eigenvalue problem on a larger matrix, which makes it possible to use any quantum algorithm for computing  the eigenvalues of the larger matrix.
Quantum eigensolvers are, in turn, some of the best understood quantum algorithms \cite{kitaev1995quantum}, with practical algorithms \cite[e.g.]{parker2020quantum,egger2020warmstarting} even for noisy quantum devices.
Indeed, one can show \cite{Wocjan2006} that any algorithm for a quantum computer with an exponential speed-up is reducible to an eigensolver.

There are two options for linearising the generalized eigenvalue problem, broadly speaking. Either we utilize the {\em companion
	linearization} %of Berhanu \cite[Section 5.3.2]{berhanu2005polynomial} 
	\cite{gohberg1982matrix,Mackey2006,higham2006conditioning}
	to transform the polynomial eigenvalue problem \eqref{eq:PEP} into a linear pencil in a higher dimension, or we utilize the congruence transformations \cite{Lucas2004}. Either way, we express the generalized eigenvalues in the generalized problem as the standard eigenvalues of a larger matrix. 

\subsection{Companion Linearization}

Let us consider the polynomial eigenvalue problem \eqref{eq:PEP}. 
Starting from the generalized eigenvalue problem $C_0 - \lambda C_1$, where the companion matrices \cite[Chapter 4]{gohberg1982matrix} are:
{ \scriptsize
  \[
	C_0=
	\left[
	\begin{array}{cccc}
	B_d & 0 & \cdots & 0 \\
	0 & I_m &  \ddots & \vdots \\
	\vdots & \ddots & \ddots & 0\\
	0 & \cdots & 0 & I_m
	\end{array}
	\right]
	\text{, }
	C_1 =
	\left[
	\begin{array}{cccc}
	B_{d-1} & B_{d-2} & \cdots & B_0 \\
	- I_m & 0 &  \cdots & 0 \\
	\vdots & \ddots & \ddots & \vdots\\
	0 & \cdots & - I_m & 0
	\end{array}
	\right] ,
	\]
}%%
\noindent
where $I_{m}$ denotes the $m \times m$ identity matrix.
%The eigenvectors $\bm{x}$ and $\bm{z}$ are related as follows:
%$\bm{z} = [1, \lambda, \dots, \lambda^{d-2}, \lambda^{d-1}]^{\top}
%\otimes \bm{x}$.
%%
we obtain the 
usual linear eigenvalue problem
$(\lambda I_d - C_2) z = 0$, where
\[
C_2 =
\left[
\begin{array}{cccc}
B_{d-1} B_d^{-1} & B_{d-2} B_d^{-1} & \cdots & B_0 B_d^{-1} \\
-I_m & 0 &  \cdots & 0 \\
\vdots & \ddots & \ddots & \vdots\\
0 & \cdots & -I_m & 0
\end{array}
\right] .
\]
The eigenvectors are roots of the characteristic polynomial of
$C_2$.
%$\lambda$ of \eqref{eq:PEP}

\begin{table}[!t]
    \caption{ A short history of options for translating generalized eigenvalue problems to eigenvalue problems.}\label{tab:GEP} 
\centering
    \begin{tabularx}{\columnwidth}{@{\extracolsep{\fill}}lllX}
    \hline
    Ref. & Year & Approach / Algorithm \\ \hline
\cite{gohberg1982matrix,berhanu2005polynomial} & folklore \; & Companion linearization  \\
\cite{parlett1971analysis} & 1971 & Three RRD (SPEC / SPEC / SVD) \\
\cite{fix1972algorithm} & 1972 & Three RRD (SPEC / SPEC / QR) \\
\cite{bunse1984algorithm} & 1984 & MDR \\ 
\cite{cao1987deflation} & 1987 & Three RRD (all orthogonal)  \\
\cite{demmel1993generalized} & 1993 & Generalized Upper Triangular (GUPTRI) \\ 
\cite{Lucas2004} & 2004 & Orthogonal RRD (SPEC / SPEC / SVD)  \\
\cite{Lucas2004} & 2004 & Non-orthogonal (Cholesky / LDL$^{T}$ / QRP) \\
    \hline
    \end{tabularx}
\end{table}
This approach is ready to be used on noisy quantum devices, in the sense that it does not require the implementation of any numerical linear algebra on the  quantum device, other than the eigensolver, and moreover, in that it is very robust to errors in the quantum eigensolver.
%translates rather directly to the error of the boundary oracle. 

\subsection{Congruence Transformations}

An alternative approach is known as the congruence transformations. This stems from the work of Lucas \cite[Chapter 3]{Lucas2004}, which generalizes earlier work of Fix and Heiberger \cite{fix1972algorithm}, Parlett \cite{parlett1971analysis}, and Cao \cite{cao1987deflation}. 
We refer to Section 3.4.5 of \cite{Lucas2004} for the discussion of the computational complexity. 
\footnote{Essentially, this depends on the
rank-revealing decomposition.
Spectral decomposition required $4n^{3}$ flops, while Cholesky or LDL$^{T}$ require $n^3/3$ flops.}
While this work, summarized in Table \ref{tab:GEP}, is fundamental in (multi)linear algebra, it is surprisingly little known.
Having said that, it may be less suited to noisy quantum devices, in the sense that the quantum eigensolver gets compounded up to three times within the quantum boundary oracle. 

\section{Experimental Results}
\label{sec:results}

We have implemented the random-walk variant of the cutting-plane method specialized to SDPs in Python, with a view of inclusion of the code in Qiskit \cite{Qiskit}.
The pseudo code of the algorithms is presented in Appendix \ref{sec:impl}, while numerical constants and other details of the implementation are discussed in Appendix~\ref{sec:details}.

We have tested our implementation on SDPLIB \cite{borchers1999sdplib}, a well-known benchmark.
Table \ref{tab:sdplib1} presents an overview of the solution quality obtained on a subset of the instances.
We should like to stress that the SDPLIB has been designed to test the scalability of classical interior-point methods, and 
while it may provide the ultimate test of scalability of quantum algorithms for semidefinite programming, 
none of the quantum algorithms surveyed in Table \ref{tab:cvxopt} has been tested on any instances from SDPLIB, yet.
Likewise, while there has been much effort focussed on implementations \cite{mittelmann2003independent,chalkis2020efficient} of SDP solvers, we believe these to be the first reported results of a cutting plane method on the SDPLIB.

As can be seen in Table \ref{tab:sdplib1}, our method is much slower than classical interior-point methods. 
For example, on the instance
hinf1, which has been originally developed  by P. Gahinet within control-theoretic applications,
using a $14 \times 14$ PSD matrix and 13 inequalities, our method converges to 2 significant digits in the objective function within 127 seconds. 
In contrast, 
a commonly-used classical solver SCS 2.1.4 \cite{scs,scs16} solves the hinf1 to 3 significant digits in the objective function within 5.68 seconds on the same hardware; many interior-point methods \cite{mittelmann2003independent} are much faster still. 
As we detail in Table~\ref{tab:sdplib2} in Appendix~\ref{app:res}, 
on many other instances, our method terminates after 24 hours without obtaining a solution matching 1 significant digit in the value of the objective function.
Despite the appealing iteration-complexity results for cutting-plane methods, cf. Table \ref{tab:cutting_plane_method}, their practical utility remains limited, when executed classically. 

%Appendix~\ref{app:res} presents the complete results. 

In terms of a potential quantum speed-up, much depends on the speed-up of the eigensolver, as discussed in Section~\ref{sec:main}.
For instance, on qap6, which features a $37 \times 37$ PSD matrix variable, approximately 8 hours and 24 minutes are spent in the eigensolver, classically. A square root of the run-time, which could illustrate a quadratic speed-up in a realistically implementable quantum eigensolver \cite[e.g.]{somma2013spectral,parker2020quantum}, would reduce this to less than 174 seconds. 
A logarithmic reduction of the run-time, which could illustrate the impact of an exponential quantum speed-up \footnote{An exponential quantum speed-up claimed by \cite{lloyd2014quantum} only under very particular circumstances, incl. low-rank matrices and strong assumptions on the initialization, has since been disputed \cite{Tang2018,Tang2020,Tang2021,chepurko2020quantum}. We do \emph{not} claim an exponential quantum speed-up is available.}, would reduce this to less than 5 seconds. 
A commonly used classical solver SCS 2.1.4 \cite{scs,scs16} solves the qap6 within 1.49 seconds on the same classical hardware. 

\begin{figure}
\centering
\includegraphics[width=0.45\textwidth]{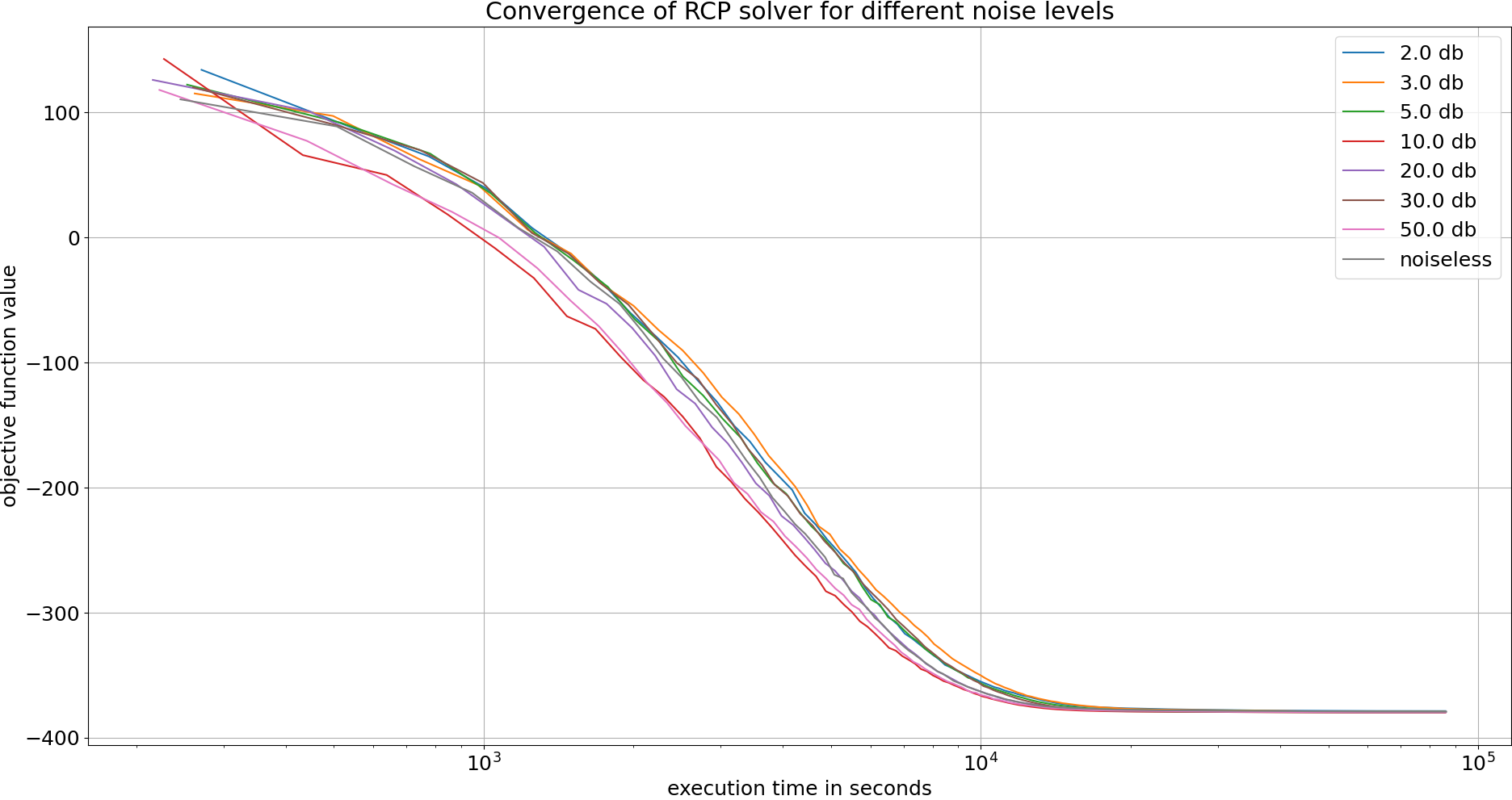}
\caption{The effect of errors in the eigensolver on the instance \texttt{qap6}, as applied with different signal to noise ratios (SNR). The curves demonstrate the evolution of objective function value in time.}
\label{fig:noise}
\end{figure}

Even the exponential speed-up in the eigensolver would hence yield a speed-up of the overall cutting-plane method only for instances (much) larger than a $37 \times 37$ PSD matrix, whilst our current ability to realize any quantum speed-up \cite{lloyd2014quantum,parker2020quantum} whatsoever in an eigensolver for an $37 \times 37$ matrix is lacking. Indeed, even the state preparation for a $37 \times 37$ matrix is presently out of reach.
Still, should the state preparation for larger matrices prove feasible, the overall speed-up may be of interest.

On a more positive note, in terms of the robustness to errors in the boundary oracle, which could be implemented with a quantum eigensolver as discussed in Section~\ref{sec:main}, the random-walk variant of the cutting-plane method may be hard to improve upon. As we illustrate in Figure~\ref{fig:noise}, even multiplicative noise in the eigenvalue computation corresponding to  the signal-to-noise ratio of approximately 2 dB does not change the performance of the algorithm on qap6, substantially.
(See Appendix \ref{sec:details} for the details of the noise model.)
This also has an intuitive interpretation, when one recalls that we use the boundary oracle to estimate the line segment along a sampled random direction that lies within the feasible set. We do not, however, use the estimated end points of the line segment \emph{per se}: we only sample from the line segment. Unless the error in the eigensolve leads to sampling from beyond the line segment, outside of the feasible set, the error has no discernible impact on the performance and does not propagate further. 
The fact that we can accommodate a substantial amount of noise in the quantum eigensolver 
could be seen as a basis of an approach suitable for noisy quantum devices. 

%1. A cartoon for the workings projected to 2D (2x2 symmetric matrices).
%2. Dependence of the iteration complexity on the precision of the eigensolver and dimension.

\begin{table*}[t!]
\caption{An overview of the solution quality obtained within a 24-hour time limit on a subset of smaller problems from SDPLIB, a well-known benchmark:
Instance name, its variant, number of constraints and size of the matrix  variable, reference objective function value in SDPLIB, 
 objective function value obtained by a classical solver (SCS, \cite{scs,scs16}),
and the objective-function value at the termination of RCP (in parentheses, if the run terminated with a time out).
Reasons for termination are detailed in Appendix~\ref{tab:sdplib2}. 
}
\label{tab:sdplib1}
\begin{tabularx}{\textwidth}{@{\extracolsep{\fill}}llrrrrr}
    \hline
 Instance & Type & $m$ & $n \times n$ &  Ref. Obj. &  SCS Obj. &  RCP Obj.  \\
    \hline
hinf1  &  primal  &  13 & $14 \times 14$  &   2.032600  &   2.036422  &   2.253482  \\
hinf10  &  primal  &  21 & $18 \times 18$  &   108.711800  &   107.808783  &   151.069581  \\
mcp100  &  primal  &   100 & $100 \times 100$  &   226.157350  &   226.151147  &   318.123641  \\
mcp124-1  &  primal  &   124 & $124 \times 124$  &   141.990480  &   141.988895  &   (250.149694)  \\
mcp124-2  &  primal  &   124 & $124 \times 124$  &   269.880170  &   269.885616  &   (383.003507)  \\
mcp124-3  &  primal  &   124 & $124 \times 124$  &   467.750110  &   467.755324  &   (614.279456)  \\
mcp124-4  &  primal  &   124 & $124 \times 124$  &   864.411860  &   864.413665  &  (1120.640088)  \\
mcp250-2  &  primal  &   250 & $250 \times 250$  &   531.930080  &   531.928325  &   (758.006964)  \\
mcp250-3  &  primal  &   250 & $250 \times 250$  &   981.172570  &   981.185856  &  (1272.445273)  \\
mcp250-4  &  primal  &   250 & $250 \times 250$  &  1681.960100  &  1681.959153  &  (2155.741152) \\ 
truss1  &  primal  &   6 & $13 \times 13$  &  -8.999996  &  -8.999996  &  -7.101001  \\
truss2  &  primal  &  58 & $133 \times 133$  &  -123.380360  &  -123.376950  &   (-21.458773) \\
truss3  &  primal  &  27 & $31 \times 31$  &  -9.109996  &  -9.110169  &  -5.232127 \\ 
truss4  &  primal  &  12 & $19 \times 19$  &  -9.009996  &  -9.010000  &  -5.826993 \\
    \hline
\end{tabularx}
\end{table*}

\section{Conclusions}
\label{sec:conclusion}

We have demonstrated how to utilize  eigensolvers in solving semidefinite programs, which are perhaps the broadest widely used class of convex optimization problems. 
The resulting randomized cutting plane method has several non-trivial steps, with several design choices for each step, as documented in Tables~	\ref{tab:walks}--\ref{tab:GEP}. This may hence suggest something of a framework for the development of further related algorithms, by varying the design choices we made. 

\vskip 6mm 

\section*{Acknowledgements}
The authors acknowledge a substantial contribution by Cunlu Zhou during his internship at IBM Research in the summer of 2018, including the first implementation. Cunlu chose not to be listed as a co-author. 
Jakub acknowledges useful discussions with Joran Van Apeldoorn, Andr{\'a}s Gily{\'e}n, Sander Gribling,
Martin Mevissen, and Jiri Vala.
We also acknowledge that \cite{bharti2021nisq} has appeared at a similar time, with a different algorithm, also applicable to noisy quantum devices, but without any run-time bounds. 

IBM, the IBM logo, and ibm.com are trademarks of International Business Machines Corp., registered in many jurisdictions worldwide. Other product and service names might be trademarks of IBM or other companies. The current list of IBM trademarks is available at https://www.ibm.com/legal/copytrade.

Jakub Mare\v{c}ek's research has been supported by the OP VVV project CZ.02.1.01/0.0/0.0/16 019/0000765 Research Center for Informatics.

This work has also received funding from the Disruptive Technologies Innovation Fund (DTIF), by Enterprise Ireland, under project number DTIF2019-090.

\bibliography{refs.bib}

\clearpage
\appendix
\onecolumngrid

\section{Statistical properties of the empirical minimum over a convex body}
\label{app:statistical}

Given $c\in \mathbb{R}^n$, we define the following random variables that represent the value of a linear objective evaluated at the random points $x^{(i)}$: $f^{(i)}=c^{T}x^{(i)}$, $i=1,\ldots,N$. Then we can define the so-called \emph{empirical minimum} over these random points as
\begin{equation}
    f_{[1]} = \min_{i=1,\ldots,N}f^{(i)}\label{eq:1}. 
\end{equation}
Notice that $f_{[1]}$ is also a random variable; it represents the socalled \emph{first order statistics} of $f^{(i)}$. The key theorem below proves that, for every convex body $\mathcal{X}$, the expected value of the relative distance between the empirical minimum $f_{[1]}$ and the true one $f^* = \min_{\mathcal{X}}c^{T}x$ is bounded from below and from above by constants that depend only on $n$ and $N$. 

\begin{lemma}[Brunn]\label{lem:1} Let $\calX \subset \mathbb{R}^n$ be a convex body. Define a parallel slice $\calX_s = \calX \cap \{x: x_1 = s\}$ and its $(n-1)$-dimensional volume $v_{\calX}(s) = \vol(\calX_s)$. Then the function $v_{\calX}(s)^{\frac{1}{n-1}}$ is concave, and
$$\vol(\calX_s)^{\frac{1}{n-1}} \geq \lambda\vol(\calX_{s_1})^{\frac{1}{n-1}} + (1-\lambda)\vol(\calX_{s_2})^{\frac{1}{n-1}},$$
where $s=\lambda s_1 + (1-\lambda) s_2$, $\lambda \in [0,1]$.
\end{lemma}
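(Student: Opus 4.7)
The plan is to obtain this statement as a direct corollary of the Brunn--Minkowski inequality in $\mathbb{R}^{n-1}$. Recall that Brunn--Minkowski asserts that for any two nonempty compact sets $A, B \subset \mathbb{R}^{n-1}$ and any $\lambda \in [0,1]$,
\[
  \vol_{n-1}\bigl(\lambda A + (1-\lambda) B\bigr)^{\frac{1}{n-1}} \;\ge\; \lambda \, \vol_{n-1}(A)^{\frac{1}{n-1}} + (1-\lambda)\,\vol_{n-1}(B)^{\frac{1}{n-1}},
\]
where $+$ denotes Minkowski addition. I would treat this as a known inequality (the standard proofs proceed either by induction on $n$ via the Prékopa--Leindler inequality, or by the elementary box-decomposition argument of Hadwiger--Ohmann), and reduce the lemma to it.

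First I would identify each slice $\calX_{s_1}$ and $\calX_{s_2}$ with its projection onto the hyperplane $\{x_1 = 0\} \cong \mathbb{R}^{n-1}$, so that $\vol(\calX_{s_i})$ equals the $(n-1)$-dimensional Lebesgue measure of the corresponding subset of $\mathbb{R}^{n-1}$. If either slice is empty the inequality is trivial (interpret the corresponding term as $0$), so we may assume both are nonempty.

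Second, the key geometric observation: by convexity of $\calX$, for any $p_1 \in \calX_{s_1}$ and $p_2 \in \calX_{s_2}$, the point $\lambda p_1 + (1-\lambda) p_2$ belongs to $\calX$ and has first coordinate $\lambda s_1 + (1-\lambda) s_2 = s$, hence lies in $\calX_s$. In terms of Minkowski sums of the projected slices this yields the inclusion
\[
  \lambda \, \calX_{s_1} + (1-\lambda)\, \calX_{s_2} \;\subseteq\; \calX_s.
\]
Third, monotonicity of volume together with Brunn--Minkowski gives
\[
  \vol(\calX_s)^{\frac{1}{n-1}} \;\ge\; \vol\bigl(\lambda \calX_{s_1} + (1-\lambda)\calX_{s_2}\bigr)^{\frac{1}{n-1}} \;\ge\; \lambda\,\vol(\calX_{s_1})^{\frac{1}{n-1}} + (1-\lambda)\,\vol(\calX_{s_2})^{\frac{1}{n-1}},
\]
which is the stated inequality; concavity of $s \mapsto v_\calX(s)^{1/(n-1)}$ on the support interval is an immediate reformulation.

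The only genuinely hard step is Brunn--Minkowski, which I would cite rather than reprove; the remaining work is just the convexity-based inclusion and a routine identification of slices with subsets of $\mathbb{R}^{n-1}$. A minor care point is the boundary of the support interval, where a slice may degenerate to a lower-dimensional set of $(n-1)$-volume zero, but this is handled by the usual convention that the $\frac{1}{n-1}$-th power of zero is zero and the inequality remains valid.
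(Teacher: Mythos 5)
Your proposal is correct and follows essentially the same route as the paper: establish the inclusion $\lambda\,\calX_{s_1} + (1-\lambda)\,\calX_{s_2} \subseteq \calX_s$ from convexity and then invoke the Brunn--Minkowski inequality together with monotonicity of volume. The extra care you take with empty or degenerate slices is a minor refinement the paper omits but does not change the argument.
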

\begin{proof}
First note that $\lambda\calX_{s_1} + (1-\lambda)\calX_{s_2} \subseteq \calX_s$. Indeed, using convexity of $\mathcal{X}$, for all $(s_1,\bar{y}_1) \in \calX_{s_1}$ and $(s_2,\bar{y}_2) \in \calX_{s_2}$, we have $(s,\bar{y}) = (\lambda s_1 + (1-\lambda)s_2, \lambda \bar{y}_1 + (1-\lambda)\bar{y}_2) = \lambda (s_1,\bar{y}_1)+(1-\lambda) (s_2,\bar{y}_2) \in \mathcal{X} \cap \{x:x_1=s\}=\calX_s$. Then, by the Brunn-Minkowski inequality, we have
$$\vol(\calX_s)^{\frac{1}{n-1}} \geq \vol(\lambda\calX_{s_1} + (1-\lambda)\calX_{s_2})^{\frac{1}{n-1}} \geq \lambda\vol(\calX_{s_1})^{\frac{1}{n-1}} + (1-\lambda)\vol(\calX_{s_2})^{\frac{1}{n-1}},$$
i.e., $v_{\calX}(s)^{\frac{1}{n-1}}$ is concave.
\end{proof}

\begin{theorem}\label{thm:1}
Let $\mathcal{X} \subset \mathbb{R}^n$ be a convex body. Given $c\in \mathbb{R}^n$, define $h = \max_{\mathcal{X}}c^{T}x - \min_{\mathcal{X}}c^{T}x$ and $f^* = \min_{\mathcal{X}}c^{T}x$. Then it holds that

\begin{align}
    \frac{h}{nN+1} \leq \expec{f_{[1]}-f^*} &\leq \frac{h}{n}B\left(N+1, \frac{1}{n}\right) \label{eq:3.2}\\
                                            &\leq h\left(\frac{1}{N+1}\right)^{\frac{1}{n}}\label{eq:3.3},
\end{align}
where the expectation is taken with respect to samples $x^{(1\ldots\infty)}$ and $B(\cdot,\cdot)$ is the Euler Beta function.
\end{theorem}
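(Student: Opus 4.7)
The plan is to reduce everything to properties of the one-dimensional distribution of $c^T X - f^*$ when $X$ is uniform on $\mathcal{X}$. Write $F$ for the CDF of this distribution on $[0, h]$ and $\bar F = 1 - F$. The starting identity is the tail formula $\mathbb{E}[f_{[1]} - f^*] = \int_0^h \bar F(s)^N\,ds$, obtained by expanding $f_{[1]} - f^*$ as the minimum of $N$ i.i.d.\ copies and applying the standard tail-integral representation of a non-negative random variable.

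The key structural fact I would establish next is that both $F^{1/n}$ and $\bar F^{1/n}$ are concave on $[0, h]$. To see this, consider the $(n+1)$-dimensional convex body $\widetilde{\mathcal{X}} = \{(x,t) \in \mathbb{R}^n \times [0,h] : x \in \mathcal{X},\ c^T x \leq t\}$. The slice at fixed $t$ has $n$-dimensional volume $W(t) := \vol(\mathcal{X} \cap \{c^T x \leq t\})$, so applying Brunn's inequality (Lemma~\ref{lem:1}) to $\widetilde{\mathcal{X}}$ yields that $W(t)^{1/n}$ is concave in $t$, and hence so is $F(t)^{1/n} = (W(t)/W(h))^{1/n}$. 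An identical argument using the convex body $\{(x,t) : x \in \mathcal{X},\ 0 \leq t \leq c^T x\}$ establishes that $\bar F^{1/n}$ is concave as well. Since $F^{1/n}$ is concave on $[0,h]$ with boundary values $0$ and $1$, the chord inequality for concave functions gives $F(s)^{1/n} \geq s/h$, equivalently $\bar F(s) \leq 1 - (s/h)^n$; by symmetry, $\bar F(s) \geq (1 - s/h)^n$.

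Integrating these two pointwise bounds yields the inner two inequalities of the theorem. The lower bound follows from $\int_0^h (1-s/h)^{nN}\,ds = h/(nN+1)$, while the upper bound \eqref{eq:3.2} comes from $\int_0^h (1-(s/h)^n)^N\,ds$, which, after the substitution $u = (s/h)^n$, equals $(h/n)\,B(N+1, 1/n)$. For the final inequality \eqref{eq:3.3} I would argue probabilistically: letting $U_1,\dots,U_N$ be i.i.d.\ $\mathrm{Uniform}(0,1)$ and $Y_i = U_i^{1/n}$, one computes $\Pr(\min_i Y_i > u) = (1-u^n)^N$, so $\int_0^1 (1-u^n)^N\,du = \mathbb{E}[\min_i Y_i] = \mathbb{E}[(\min_i U_i)^{1/n}]$; since $x \mapsto x^{1/n}$ is concave and $\mathbb{E}[\min_i U_i] = 1/(N+1)$ (the minimum of $N$ uniforms being $\mathrm{Beta}(1,N)$), Jensen's inequality delivers the bound $(N+1)^{-1/n}$ and closes the chain.

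The main obstacle is establishing the $(1/n)$-concavity of $W$. Brunn's lemma as stated in the paper concerns cross-sectional volumes $v(s) = W'(s)$ and only yields $v^{1/(n-1)}$ concave, so an additional argument is needed to control the integral $W(s) = \int_0^s v(t)\,dt$. The lifting trick above, applying Brunn one dimension higher to an auxiliary convex body whose $t$-slices are precisely $\{c^T x \leq t\}$, handles this cleanly; an alternative would be to invoke the Borell--Brascamp--Lieb inequality directly. Everything else reduces to elementary integration plus a single application of Jensen's inequality.
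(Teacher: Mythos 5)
Your proof is correct, and it reaches exactly the same pair of pointwise tail bounds that drive the paper's argument, namely $(1-s/h)^n \le \Pr\{f^{(i)}-f^*\ge s\} \le 1-(s/h)^n$, which are then integrated identically to give $h/(nN+1)$ and $\tfrac{h}{n}B(N+1,\tfrac1n)$ --- but you derive those bounds by a genuinely different and, in my view, cleaner route. The paper Schwarz-symmetrizes $\mathcal{X}$ about the $c$-axis and compares the symmetrized body with a cone of equal volume, which requires the slightly delicate crossing-point argument around $s^*$; you instead apply Brunn's inequality one dimension up, to the convex bodies $\{(x,t): x\in\mathcal{X},\ c^Tx\le t\le h\}$ and $\{(x,t): x\in\mathcal{X},\ 0\le t\le c^Tx\}$, whose $t$-slices are precisely the sublevel and superlevel sets, so that the cumulative volume $W(t)$ is $1/n$-concave and both tail bounds fall out of the chord inequality. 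You are also right that this extra step is genuinely needed: Lemma~\ref{lem:1} as stated controls cross-sectional volumes to the power $1/(n-1)$, not cumulative volumes to the power $1/n$, and the lifting (or Borell--Brascamp--Lieb) closes that gap rigorously. Two further points in your favour: your lower tail bound is stated in the correct form $(1-s/h)^n$, whereas the paper's displayed inequality $\Pr\{f^{(i)}\ge s\}\ge s^n/h^n$ cannot hold as written near $s=h$ (the two versions yield the same integral only because the integrand is symmetric under $s\mapsto h-s$); and your identity $\tfrac1n B(N+1,\tfrac1n)=\mathbb{E}\bigl[(\min_i U_i)^{1/n}\bigr]$ followed by Jensen replaces the paper's citation of Alzer's Theorem~3.4 with a self-contained two-line proof of \eqref{eq:3.3}.
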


\begin{proof}
Assume, without loss of generality, that $c=[\,1\,0 \ldots 0\,]^{T}$ (that is, $c^{T}x = x_1$) and that $x^* = \argmin_{\mathcal{X}}c^{T}x=0$. (Indeed, if in general $c^{T}x=\sum_{i=1}^{i=n}c_i x_i$, then let $y_1 = \sum_{i=1}^{i=n}c_i x_i$, so equivalently, we have $\min \Tilde{c}^{T}y = y_1$, $y\in\Tilde{\mathcal{X}}$, where $\Tilde{c} = [\,1\,0\ldots 0\,]$ and $\Tilde{\mathcal{X}} = \{y\,|\,y_1 = \sum_{i=1}^{i=n}c_i x_i,\,x\in\mathcal{X}\}$. Notice that $\Tilde{\mathcal{X}}$ is still a convex body. If $x^* = \argmin_{\mathcal{X}}c^{T}x \neq 0$, then let $y = x-x^*$, so equivalently, we have $y^*=\argmin_{\{\mathcal{X}-x^*\}}c^{T}y=0$.) 

We begin by proving the upper bound in (\ref{eq:3.2}), following \cite{dsp10}, but correcting several flawed steps. Let
$$\calX_s = \calX \cap \{x: x_1 = s\}$$
Then by Lemma \ref{lem:1}, we have
$$\vol(\calX_s)^{\frac{1}{n-1}} \geq \lambda\vol(\calX_{s_1})^{\frac{1}{n-1}} + (1-\lambda)\vol(\calX_{s_2})^{\frac{1}{n-1}}$$
for $s=\lambda s_1 + (1-\lambda) s_2$, $\lambda \in [0,1]$. 
Now define $\calX_0$ obtained by replacing each $\calX_s$ by an $(n-1)$-dimensional ball $\calB_s$ of the same volume and centered at the point $[\,s\,0\ldots 0\,]^T$, as shown in Figure 3.1. Then $\vol(\calX_0) = \vol(\calX)$ and 
\begin{equation}\label{eq:4}
   \vol(\calB_s)^{\frac{1}{n-1}} \geq \lambda\vol(\calB_{s_1})^{\frac{1}{n-1}} + (1-\lambda)\vol(\calB_{s_2})^{\frac{1}{n-1}} 
\end{equation}
Note that the volume of a $n$-dimensional ball $\calB$ with radius $r$ is
\[
    \vol(S) = \frac{\pi^{\frac{n}{2}}}{\Gamma(\frac{n}{2}+1)} r^{n},
\]
where $\Gamma$ is Euler's gamma function.
Hence, if we denote by $r(s)$ the radius of the $(n-1)$-dimensional ball $\calB$ at $s$, then (\ref{eq:4}) implies that $r(\lambda s_1 + (1-\lambda) s_2) \geq \lambda r(s_1) + (1-\lambda)r(s_2)$. This, in turn, implies that $r(s)$ is a concave function; thus $\mathcal{X}_0$ is a convex set (intuitively, think the shape of $\mathcal{X}_0$). Note that now $\calX_0$ is symmetric about the $x_1$ axis.

As a second step, define now the cone $\mathcal{\mathcal{K}}$ with base area $S = \frac{n}{h}\vol(\mathcal{X})$, the axis directed along $c$ and located as shown in Figure 3.2, with $h$ being the height of $\mathcal{\mathcal{K}}$. Then, by construction, we have $\vol(\mathcal{\mathcal{K}})=\vol(\mathcal{X})=\vol(\mathcal{X}_0)$. Let $s^*$ be the coordinate at which the sets $\mathcal{X}_0$ and $\mathcal{\mathcal{K}}$ intersect; see Figure 3.2. Next, for every $s\in [0,h]$, define the sets $\mathcal{X}^+(s)=\{x\in\mathcal{X}:x_1\geq s\}$, $\mathcal{X}^+_0(s)=\{x\in\mathcal{X}_0:x_1\geq s\}$ and $\mathcal{K}^+(s)=\{x\in\mathcal{K}:x_1\geq s\}$ as shown in Figure 3.3. Then the following chain of inequalities holds:
\begin{align*}
    \mathbb{P}\{f^{(i)}\geq s\} &= \mathbb{P}\{x_1^{(i)}\geq s\} = \frac{\vol(\mathcal{X}^+(s))}{\vol(\mathcal{X})} = \frac{\vol(\mathcal{X}^+_0(s))}{\vol(\mathcal{X}_0)}\\
                                & \leq \frac{\vol(\mathcal{K}^+(s))}{\vol(\mathcal{K})} = 1 - \frac{\vol(\mathcal{K}^-(s))}{\vol(\mathcal{K})} = 1-\left(\frac{s}{h}\right)^{n} = \frac{h^{n}-s^{n}}{h^{n}}\numberthis \label{eq:5}
\end{align*}
where the last inequality follows from the fact that, for $s\geq s^*$,
\[
    \frac{\vol(\mathcal{X}^+_0(s))}{\vol(\mathcal{X}_0)} \leq \frac{\vol(\mathcal{K}^+(s))}{\vol(\mathcal{K})},
\]
and, for $s\leq s^*$,
\[
    \frac{\vol(\mathcal{X}^+_0(s))}{\vol(\mathcal{X}_0)} = 1 - \frac{\vol(\mathcal{X}^-_0(s))}{\vol(\mathcal{X}_0)} \leq 1 - \frac{\vol(\mathcal{K}^-(s))}{\vol(\mathcal{X}_0)} = \frac{\vol(\mathcal{K}^+(s))}{\vol(\mathcal{K})},
\]
where $\mathcal{X}_0^-(s)=\{x\in \mathcal{X}_0:x_1 < s\}$ and $\mathcal{K}^-(s)=\{x\in \mathcal{K}: x_1 < s\}$.

As a final step for proving the upper bound in (\ref{eq:3.2}), notice that $f_{[1]}$ is a positive random variable, and hence we may write
\begin{align*}
    \expec{f_{[1]}} &= \int_0^h \mathbb{P}\{f_{[1]}\geq s\}\mathrm{d}s = \int_0^h\left(\mathbb{P}\{f^{(i)}\geq s\}\right)^N \mathrm{d}s\\
    [\text{from}\ (\ref{eq:5})]&\leq \int_0^h\left(\frac{h^{n}-s^{n}}{h^{n}}\right)^N \mathrm{d}s\\
    [t=s^{n}/h^{n}]&=\frac{h}{n}\int_0^1(1-t)^N t^{\frac{1}{n}-1} dt = \frac{h}{n}B\left(N+1,\frac{1}{n}\right).
\end{align*}
Finally, applying Theorem 3.4 in \cite{alzer03}, we get the following bounds
\[
    1-\left(\frac{N}{N+1}\right)^{\frac{1}{n}} \leq \frac{1}{n}B\left(N+1,\frac{1}{n}\right) \leq \left(\frac{1}{N+1}\right)^{\frac{1}{n}},
\]
thus proving the inequality in (\ref{eq:3.3}), with the cone $\mathcal{K}$ being the attainable ``worst-case'' configuration for $\mathcal{X}$. 

The lower bound in (\ref{eq:3.2}) can be proved similarly. Namely, instead of the $\mathcal{K}$ above, consider the ``inverted'' cone; then with reasonings identical to those above, it proves to be the "best case" configuration. We hence derive the following inequality:
\begin{equation}\label{eq:6}
    \mathbb{P}\{f^{(i)}\geq s\} \geq \frac{s^{n}}{h^{n}}.
\end{equation}
Therefore, we obtain
\begin{align*}
    \expec{f_{[1]}} &= \int_0^h \mathbb{P}\{f_{[1]}\geq s\}\mathrm{d}s = \int_0^h\left(\mathbb{P}\{f^{(i)}\geq s\}\right)^N \mathrm{d}s\\
    [\text{from}\ (\ref{eq:6})] &\geq \int_0^h\left(\frac{s^{n}}{h^{n}}\right)^N \mathrm{d}s = \frac{h}{nN+1},
\end{align*}
which concludes our proof.
\end{proof}
\begin{corollary}
Let $\calX \subset \mathbb{R}$ be a convex body. Define $h$, $f^*$, and $f_{[1]}$ as in Theorem \ref{thm:1}. Then it holds that
\begin{equation}
    \expec{(f_{[1]}-f^*)^2} \leq \frac{2h^2}{n}B\left(N+1,\frac{2}{n}\right) \leq h^2\left(\frac{1}{N+1}\right)^{\frac{2}{n}}.
\end{equation}
\end{corollary}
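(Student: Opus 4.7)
The plan is to adapt the proof of Theorem \ref{thm:1} almost verbatim, replacing the identity $\expec{f_{[1]}} = \int_0^h \mathbb{P}\{f_{[1]} \geq s\}\,\mathrm{d}s$ with the second-moment analog for a nonnegative random variable. Specifically, after applying the same WLOG reduction used in Theorem \ref{thm:1} (rotate so that $c = e_1$ and translate so that $f^* = 0$), the quantity $f_{[1]} - f^*$ becomes a nonnegative random variable supported on $[0,h]$, and one has
\begin{equation*}
\expec{(f_{[1]} - f^*)^2} \;=\; \int_0^h 2s\,\mathbb{P}\{f_{[1]} \geq s\}\,\mathrm{d}s.
\end{equation*}

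Next, I would reuse the tail bound already established in the proof of Theorem \ref{thm:1}, namely the ``worst-case cone'' inequality \eqref{eq:5}, which gives
\begin{equation*}
\mathbb{P}\{f_{[1]} \geq s\} \;=\; \bigl(\mathbb{P}\{f^{(i)} \geq s\}\bigr)^N \;\leq\; \left(\frac{h^n - s^n}{h^n}\right)^N.
\end{equation*}
Substituting this bound into the integral and performing exactly the same change of variables $t = s^n/h^n$ (so that $2s\,\mathrm{d}s = \frac{2h^2}{n}\,t^{2/n - 1}\,\mathrm{d}t$) yields
\begin{equation*}
\expec{(f_{[1]} - f^*)^2} \;\leq\; \frac{2h^2}{n}\int_0^1 (1-t)^N\,t^{2/n - 1}\,\mathrm{d}t \;=\; \frac{2h^2}{n}\,B\!\left(N+1,\tfrac{2}{n}\right),
\end{equation*}
which is the first inequality of the corollary.

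For the second inequality, the same application of Theorem 3.4 of \cite{alzer03} used at the end of the proof of Theorem \ref{thm:1}, now invoked with parameter $2/n$ in place of $1/n$, gives $\tfrac{2}{n}\,B(N+1,2/n) \leq (1/(N+1))^{2/n}$, finishing the bound. I do not anticipate any real obstacle here: the analysis of Theorem \ref{thm:1} already supplies the nontrivial ingredients (the Brunn-slicing reduction to the cone and the Alzer inequality), and the corollary is really just the second-moment companion to the first-moment bound, differing only in the extra factor $2s$ inside the layer-cake integral and the resulting shift of the Beta-function exponent from $1/n$ to $2/n$.
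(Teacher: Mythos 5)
Your proposal is correct and follows essentially the same route as the paper: the paper writes $\expec{f_{[1]}^2}=\int_0^{h^2}\mathbb{P}\{f_{[1]}^2\geq s\}\,\mathrm{d}s$ and substitutes $t=s^{n/2}/h^n$, which after a trivial change of variables is identical to your $\int_0^h 2s\,\mathbb{P}\{f_{[1]}\geq s\}\,\mathrm{d}s$ form, and both then invoke the tail bound \eqref{eq:5} to reach $\frac{2h^2}{n}B(N+1,\frac{2}{n})$. Your explicit appeal to Alzer's inequality with parameter $2/n$ for the final step is a detail the paper leaves implicit, but it is the intended argument.
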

\begin{proof}
Assume again, without loss of generality, that $f^*=0$ and $c=[\,1\,0\ldots 0\,]^T$. Then $f_{[1]}$ is a positive random variable; hence, we may write \footnote{Note that unlike some analyses in the literature \cite{4739188,dsp10}, we use the correct  $h^2$ instead of $h$.}
\begin{align*}
    \expec{f_{[1]}^2} &= \int_0^{h^2} \mathbb{P}\{f_{[1]}^2\geq s\}\mathrm{d}s = \int_0^{h^2} \mathbb{P}\{f_{[1]}\geq \sqrt{s}\}\mathrm{d}s = \int_0^{h^2}\left(\mathbb{P}\{f^{(i)}\geq \sqrt{s}\}\right)^N \mathrm{d}s\\
    [\text{from}\ (\ref{eq:5})] &\leq \int_0^{h^2}\left(\frac{h^{n}-s^{\frac{n}{2}}}{h^{n}}\right)^N \mathrm{d}s\\
    \left[t=\frac{s^{\frac{n}{2}}}{h^n}\right] &=\frac{2h^2}{n}\int_0^1(1-t)^N t^{\frac{2}{n}-1} \mathrm{d}t = \frac{2h^2}{n}B\left(N+1,\frac{2}{n}\right).
\end{align*}
\end{proof}

\section{Expected convergence rate of RCP}
\label{sec:convergencerate}

At step $k$ of the RCP algorithm, define the following random variable:
\begin{equation}
    f_k = c^T z_k.
\end{equation}
Then the following corollary of Theorem 3.1 shows that the RCP scheme converges in first and second mean and, more importantly, that the rate of convergence is exponential.

\begin{theorem}[Expected convergence of RCP]\label{cor:5.1} 
Consider the RCP algorithm with $N_k \equiv N$. Then we have

\begin{equation}\label{eq:5.2}
    \expec{f_k - f^*} \leq \left(\frac{1}{N+1}\right)^{\frac{k}{n}} \expec{f_0 - f^*}
\end{equation}
that is, the RCP algorithm conversges in mean with rate $\left(\frac{1}{N+1}\right)^{\frac{1}{n}}$. Moreover, the RCP algorithm converges also in mean square with
\begin{equation}
    \expec{(f_k - f^*)^2} \leq \left(\frac{1}{N+1}\right)^{\frac{2k}{n}} \expec{(f_0 - f^*)^2}.
\end{equation}

\end{theorem}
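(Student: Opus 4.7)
The plan is to condition on the iterate history and apply Theorem~\ref{thm:1} one step at a time. First I would observe that the optimal solution $x^\ast \in \argmin_{\mathcal{X}} c^T x$ is preserved by every cut: since the cut defining $\mathcal{X}_{k+1}$ is $c^T(x-z_k)\leq 0$ and $c^T x^\ast = f^\ast \leq f_k = c^T z_k$, we have $x^\ast \in \mathcal{X}_{k+1}$ for all $k$. Consequently $\min_{\mathcal{X}_k} c^T x = f^\ast$. Meanwhile, the very same cut forces $\max_{\mathcal{X}_k} c^T x \leq f_{k-1}$. Denoting the objective range on the $k$th body by $h_k := \max_{\mathcal{X}_k} c^T x - \min_{\mathcal{X}_k} c^T x$, we therefore have the key pathwise bound $h_k \leq f_{k-1} - f^\ast$.

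Next, I would let $\mathcal{F}_{k-1}$ denote the $\sigma$-algebra generated by the samples and cuts through step $k-1$, so that $\mathcal{X}_k$ and $h_k$ are $\mathcal{F}_{k-1}$-measurable. Conditionally on $\mathcal{F}_{k-1}$, the point $z_k$ is precisely the empirical minimum of $c^T x$ over $N$ uniform samples drawn from the convex body $\mathcal{X}_k$, which is exactly the situation analysed in Theorem~\ref{thm:1}. Applying the upper bound (\ref{eq:3.3}) conditionally yields
\begin{equation*}
\expec{f_k - f^\ast \mid \mathcal{F}_{k-1}} \leq h_k \left(\frac{1}{N+1}\right)^{1/n} \leq (f_{k-1}-f^\ast)\left(\frac{1}{N+1}\right)^{1/n}.
\end{equation*}
Taking total expectations via the tower property and iterating from $k$ down to $0$ produces the geometric decay $\expec{f_k - f^\ast} \leq (1/(N+1))^{k/n}\,\expec{f_0 - f^\ast}$, which is (\ref{eq:5.2}). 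The mean-square statement follows by the identical conditioning argument, only invoking the $L^2$ corollary of Theorem~\ref{thm:1} instead: conditionally on $\mathcal{F}_{k-1}$ one has $\expec{(f_k - f^\ast)^2 \mid \mathcal{F}_{k-1}} \leq h_k^2 (1/(N+1))^{2/n} \leq (f_{k-1}-f^\ast)^2 (1/(N+1))^{2/n}$, and iteration finishes the argument.

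The main subtlety I expect is the justification of the conditional application of Theorem~\ref{thm:1}. The theorem is stated for $N$ i.i.d.\ uniform samples from a convex body, whereas Algorithm~4 in fact generates the $x^{(j)}$ from a single Hit-and-Run Markov chain, so they are only \emph{marginally} uniform (after mixing) and are correlated. The cleanest route is to adopt the idealisation that after $M$ H\&R steps each sample is exactly uniform on $\mathcal{X}_k$, under which the proof of Theorem~\ref{thm:1} extends verbatim (the upper bound depends only on the marginal distribution via $\mathbb{P}\{f^{(i)} \geq s\}$, not on independence). A secondary, purely bookkeeping point is to verify that $h_k$ is $\mathcal{F}_{k-1}$-measurable so the tower property applies; this is immediate from $\mathcal{X}_k$ being determined by $z_0,\dots,z_{k-1}$.
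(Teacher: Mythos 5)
Your argument is correct and is exactly the proof the paper intends: the result is presented there as a corollary of Theorem~\ref{thm:1} applied stepwise with $h = f_{k-1}-f^*$ (the paper's own remark states ``note that $h = f_{k-1} - f^*$ in the proof''), and your conditioning/tower-property writeup simply makes that argument explicit, including the observation that the cut preserves $x^*$ so that $\min_{\mathcal{X}_k} c^T x = f^*$ and $h_k \leq f_{k-1}-f^*$. The caveat you flag about the Hit-and-Run samples being only approximately i.i.d.\ uniform is a genuine idealisation that the paper makes silently, so you have, if anything, been more careful than the source.
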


\begin{remark}
From inequality (\ref{eq:5.2}) in the corollary we see that the expected number of steps required by the RCP algorithm to compute an $\alpha$-optimal solution (i.e., such that $\expec{f_k - f^* \leq \alpha}$) is at most 
\[
    k = \ceil*{\frac{1}{\ln (N+1)}n\ln \frac{R}{\alpha}}, 
\]
where $R=\expec{f_0 - f^*}$. Interestingly, when $N=1$, (\ref{eq:5.2}) becomes inequality (\ref{eq:2.3}). Indeed, when $N=1$, 
$$\frac{1}{n} B\left(2,\frac{1}{n}\right) = \frac{n}{n+1}$$
and 
$$\expec{f_k} = c^T \cg(\calX_k).$$
Using (\ref{eq:3.2}) and the proof of Corollary \ref{cor:5.1} (note that $h = f_{k-1} - f^*$ in the proof), we obtain
$$\expec{f_k - f^*} \leq \frac{n}{n+1}\expec{f_{k-1} - f^*}$$ 
and if $g_k \triangleq c^T \cg(\calX_k)$ and $g^* \triangleq f^*$, we have
$$g_k - g^* \leq \frac{n}{n+1}(g_{k-1} - g^*),$$
which is nothing but (\ref{eq:2.3}). Hence, we conclude that the derived convergence rate in Corollary \ref{cor:5.1} reduces to the one in Lemma \ref{lem:2.1} when $N=1$, while it improves by a factor of $\ln (N+1)$ when $N>1$.

\end{remark}

\section{Pseudocode}
\label{sec:impl}

\renewcommand{\labelenumi}{(\alph{enumi})}
\renewcommand{\labelenumii}{\arabic{enumii}:}
\begin{enumerate}
    \item \textit{Initialization}.\footnote{Depending on what data are used for the numerical experiments, we may not need to solve the initial optimization problem. For example, if we choose $F_0$ such that $F_0 \prec 0$, then we can use $x_0 = 0$ as the initial point.} We solve the following auxiliary problem 
        \begin{equation*}
            \begin{aligned}
                \min \gamma \textrm{ s.t. } F(x) &\preceq \gamma I.
            \end{aligned}
        \end{equation*}
        Note that $\{x=0,\,\gamma=\max\eig(F(0))\}$ is a feasible solution, therefore we can solve for the optimal solution $(x^*, \gamma^*)$. If $\gamma^* > 0$, then (\ref{eq:7.1}) is infeasible; otherwise, take $x_0 = x^*$ as initial feasible point for (\ref{eq:7.1}). 
        
    \item \textit{Main algorithm} (\texttt{RCP\_H\&R})
        \begin{enumerate}
            \item \textbf{Input:} $\calX$, $M$, $N$, $x_0\in\calX$
            \item \textbf{Output:} $z_k$
            \item $k=0$, $\calX_0=\calX$, $P_0 = \{\}$, $Y_0 = I_{n\times n}$, $x^{(0)}_0=x_0$, $z_0'=x_0$
            \item \textbf{for} $j=1$ to $N$ \textbf{do} 
            \item $(x^{(j)}_k,\underline{x}^{(j)}_k, \bar{x}^{(j)}_k) = \texttt{H\&R\_SDP}(\calX, Y_k, M, x^{(j-1)}_k, z_k')$
            \item $P_k = P_k \cup \{(x^{(j)}_k,\underline{x}^{(i)}_k, \bar{x}^{(i)}_k)\}$
            \item \textbf{end for}
            \item $(z_k, \underline{z}_k, \bar{z}_k) = \argmin\limits_{(x, \underline{x}, \bar{x})\in P_k}c^T x$
            \item $(z_{k}', \underline{z}_{k}', \bar{z}_{k}') = \argmin\limits_{(x, \underline{x}, \bar{x})\in 
            P_k\backslash\{(z_k, \underline{z}_k, \bar{z}_k)\}}c^T x$ (second best minimum)
            % \item $x^{(0)}_{k+1} = \frac{z_k + \underline{z}_{k}' + \bar{z}_{k}'}{3}$ ($\in \interior(\calX_{k+1})$, initial point for $\calX_{k+1}$)
            \item $x^{(0)}_{k+1} = z_k$ (initial point for $\calX_{k+1}$: $z_k \in \calX_{k+1}$, and $z_k$ is ensured in \texttt{H\&R\_SDP} to stay in the interior of $\calX$ )
            \item Calculate affine transformation matrix $Y_{k+1}$:
            \begin{equation*}
                \bar{y} = \frac{1}{2N}\sum_{i=1}^N (\underline{z}^{(i)}_k + \bar{z}^{(i)}_k),\ Y_{k+1} = \frac{1}{2N}\sum_{i=1}^N\left[(\underline{z}^{(i)}_k-\bar{y})(\underline{z}^{(i)}_k-\bar{y})^T + (\bar{z}^{(i)}_k-\bar{y})(\bar{z}^{(i)}_k-\bar{y})^T\right]
            \end{equation*}
            \item $\mathcal{X}_{k+1} = \{x\in\mathcal{X}_k:c^T(x-z_k')\leq 0\}$
            \item check Stopping Rule; $k\Leftarrow k+1$; go to 4.
        \end{enumerate}
%        \begin{remark}
%        What is the Stopping Rule?
%        \end{remark}
        
    \item \textit{H\&R Algorithm for SDP} (\texttt{H\&R\_SDP})
        \begin{enumerate}
            \item \textbf{Input:} $\calX$, $Y$, $M$, $x^{(0)}$ (starting point), $z'$ (cutting point defining the input convex body)
            \item \textbf{Output:} $(x,\underline{x}, \bar{x})$, where $\underline{x}$ and $\bar{x}$ are two intersection points on the boundary of the input convex body defined by $z'$, and $x$ is a uniformal random point drawn from the line segment $[\underline{x},\bar{x}]$
            \item $y^{(0)} = x^{(0)}$
            \item \textbf{for} $i=0$ to $M-1$ \textbf{do}
            \item generate a uniformly distributed random direction $\eta\in\mathbb{R}^n$ on the unit sphere and apply affine transformation $Y$ to obtain $v=Y^{\frac{1}{2}}\eta$
            \item $(\underline{x},\bar{x}) = \texttt{BO\_SDP}(\calX, y^{(i)}, v, z')$
            \item generate a uniformly distributed point $y^{(i+1)}$ in the line segment $[\underline{x},\bar{x}]$ 
            \item \qquad\textbf{while not} $F(y^{(i+1)}) \prec 0$ \textbf{repeat}
            \item \qquad\qquad step 5, 6, 7
            \item \qquad\textbf{end while} 
            \item (Remark: with an additional \emph{count} variable, may the above be used as a stopping criterion for \texttt{RCP\_H\&R} ?)
            % \item (We want to ensure that each generated $y^{(i)}$ stay in the interior of $\calX$, i.e., $F(y^{(i)}) \prec 0$. A while loop repeating the above process will just do that.)
            \item \textbf{end for}
            \item $x=y^{(M)}$; \textbf{return} $(x,\underline{x}, \bar{x})$
            \end{enumerate}
            
    \item \textit{Boundary Oracle} (\texttt{BO})
        \begin{enumerate}
            \item \textbf{Input:} $\calX$, $y$ (current point), $v$ (random direction), $z'$ (cutting point defining the input convex body)
            \item \textbf{Output:} $(\underline{x}, \bar{x})$, the intersection points between line $z=y+\lambda v$ and the boundary of the convex body defined by $z'$
            \item $(\underline{x},\bar{x}) = \texttt{BO\_SDP}(\calX, y, v)$ (see Lemma \ref{lem:7.1}.)
            \item \textbf{if} $c^T(\underline{x} - z') > 0$ \textbf{then} solve for $\underline{\lambda}: c^T(y+\lambda v - z')=0$; $\underline{x} = y + \underline{\lambda} v$
            \item \textbf{elseif} $c^T(\bar{x} - z') > 0$ \textbf{then} solve for $\bar{\lambda}: c^T(y+\lambda v - z')=0$; $\bar{x} = y + \bar{\lambda} v$
            \item \textbf{else} break
            \item \textbf{return} $(\underline{x}, \bar{x})$
            \end{enumerate}
    \item \textit{Data} \\
    $F_i$ are generated to guarantee non-emptiness of $\calX_{\LMI}$, and $F_0$ is generated such that $F_0 \prec 0$:
    $$M = 2\rand(m) - 1;\ F_0 = -M * M^T - \eye(m)$$
        \renewcommand{\labelenumii}{\Roman{enumii}.}
        \begin{enumerate}
        \item $$M = 2\rand(\frac{m}{2})-1;\, M = M + M^T;\, F_i = \blkdiag(M;-M)$$
        \item $$M = 2\rand(\frac{m}{2})-1;\, M = \triu M + (\triu M)^T - \diag(\diag(M));F_i = \blkdiag(M;-M)$$
        \item Worst-case geometry: 
        %(to be continued)
              $$\calX_{\LMI} = \{x\in\mathbb{R}^n:\norm{x}_1 \leq 1, x_1 < 0\}$$
        \end{enumerate}
\end{enumerate}

%\begin{landscape}

\section{Details of the Implementation}
\label{sec:details}

For simplicity we compute the initial solution using SCS solver from CVXPY package. Namely, we iterate SCS until any feasible solution emerged, does not matter how far it is from the reference value of objective function. This kind of solution serves as an initial one in all our simulations. Alternatively, one can use the approach suggested in \cite{4739188}, Section~6.1 but it would take even longer simulation time.

We model noisy quantum eigensolver by adding noise to the exactly computed eigenvalues.
We have considered two noise models.
In both models, the noise was defined by its signal to noise ratio (SNR), expressed in  dB. 
Here ``signal'' is an absolute value of generalised eigenvalue (Lemma 5 in the main text) and ``noise'' is its disturbance. In this account, small SNR means strong noise and poor estimation of eigenvalues, while high SNR implies that reliable estimates of the eigenvalues are available.

The first noise model is multiplicative. It respects the spectrum of the generalised eigenproblem in the sense that every eigenvalue $\lambda_i$ is disturbed in proportion to its amplitude:
$$
\lambda^{noisy}_i = \lambda^{exact}_i \left(1 + \frac{\varepsilon}{10^{(SNR / 20)}}\right),
\qquad \varepsilon \,{\sim}\, \mathcal{N}(0, 1).
$$

Fig.~\ref{fig:noise-model-one} shows convergence profiles for various noise levels.

\begin{figure}
\centering
\includegraphics[width=0.85\textwidth]{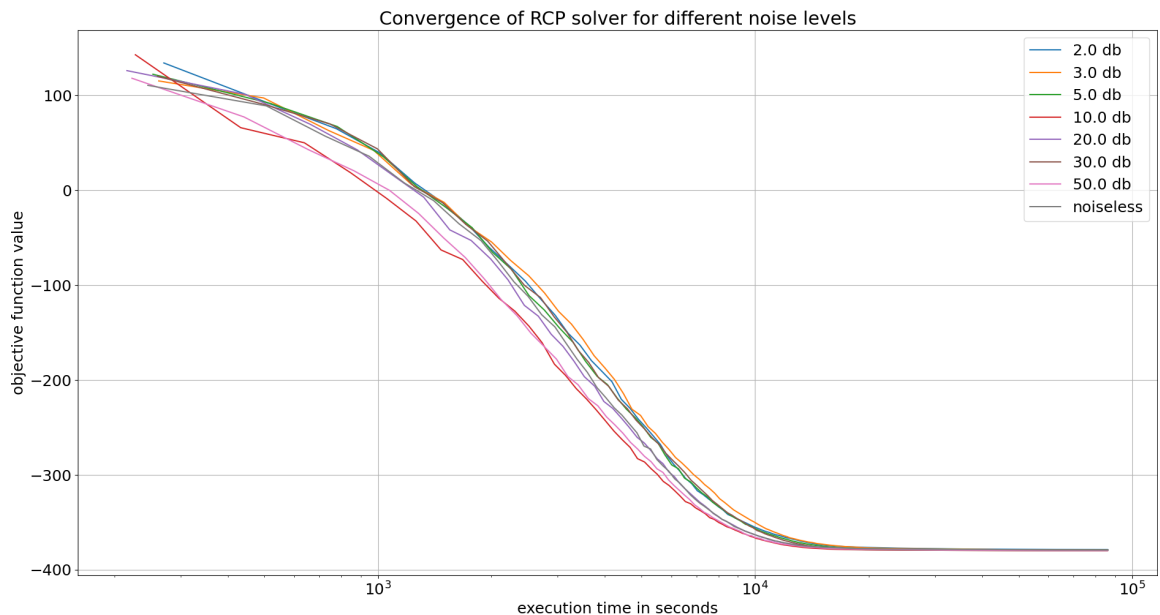}
\caption{Noise model~1 applied with different signal to noise ratios (SNR) to the problem \texttt{qap6}. The curves demonstrate evolution of objective function value in time.}
\label{fig:noise-model-one}
\end{figure}

The second noise model is additive. It adds random Gaussian noise scaled by root-mean-square eigen-value to all others:
$$
\lambda^{noisy}_i = \lambda^{exact}_i + 
\varepsilon \sqrt{\frac{\frac{1}{N} 
\sum_{i=1}^N \left(\lambda^{exact}_i\right)^2}{10^{(SNR / 10)}}},
\qquad \varepsilon \,{\sim}\, \mathcal{N}(0, 1).
$$
Mind denominator in $SNR / 10$ expression, where $10$ comes from the fact that we operate on squared ``signal'' ($\lambda_i^2$), as opposed to the first model, where just a ``signal'' amplitude is involved.  
Since eigenvalues typically differ in their orders of magnitude, the algorithm does not converge to the desired minimum ($\approx -380$ for the \texttt{qap6} problem) for high noise levels (low SNR values). The Fig.~\ref{fig:noise-model-two} demonstrates early termination in many cases.

\begin{figure}
\centering
\includegraphics[width=0.85\textwidth]{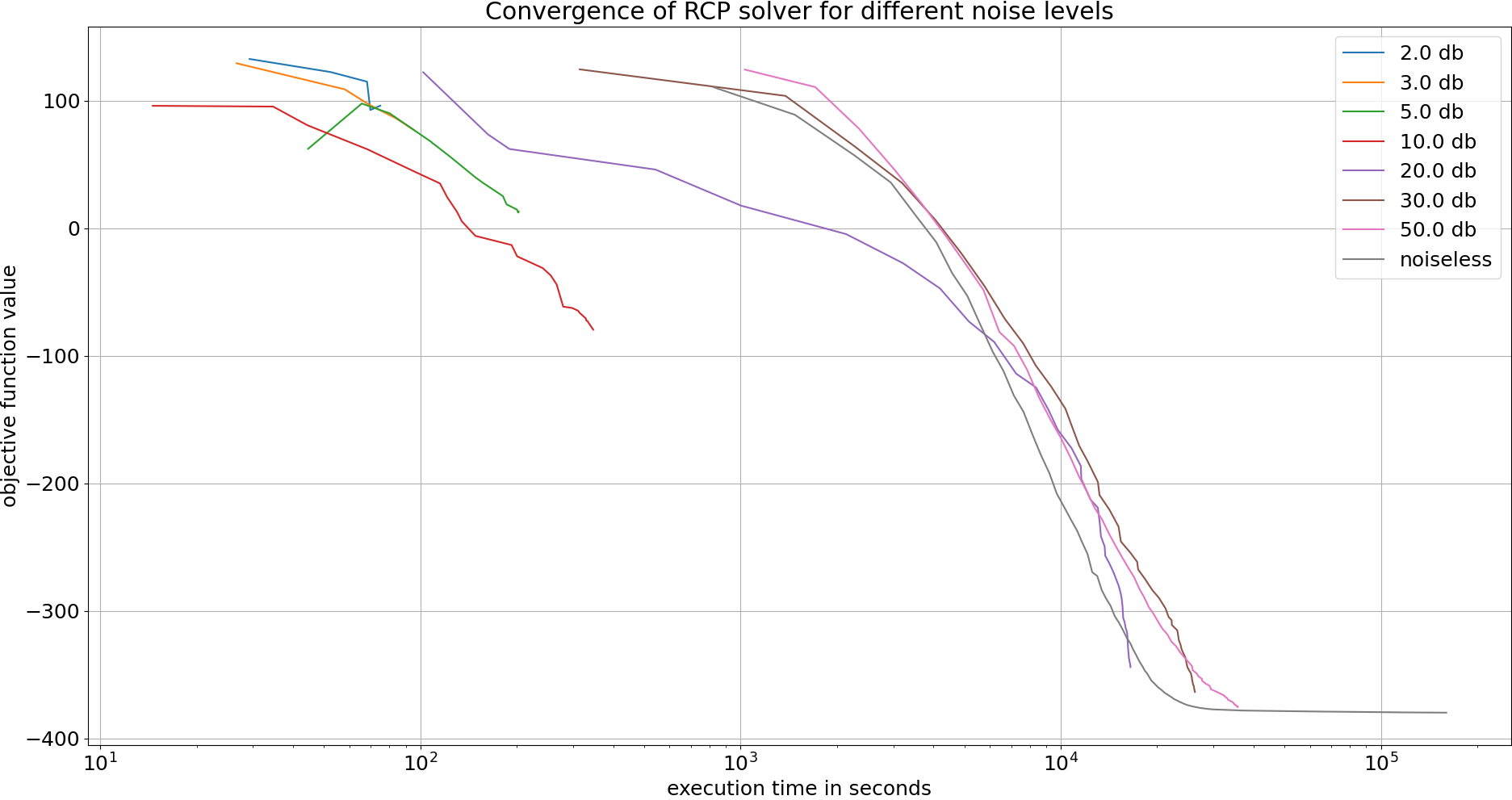}
\caption{Noise model~2 applied with different signal to noise ratios (SNR) to the problem \texttt{qap6}. The curves demonstrate evolution of objective function value in time. Noisy simulations are terminated before reaching the optimum value.}
\label{fig:noise-model-two}
\end{figure}

\paragraph{Experimental setup}

We have been using SDPLib, the standard benchmark in SDP. Not all the instances were suitable for the RCP-based solver, which is slow, comparing to classical ones. Out of the full list of 80+ instances, we picked up a few ``good'' ones, according to the following simple criteria:
\begin{itemize}
\item primal problems only.
\item those solved successfully by SCS solver with final objective function close to the reference one (provided by SDPLib).
\item small enough, so as to be  fast to solve:  namely, those which SCS solver managed to solve in less than $200$ seconds.
\end{itemize}

The list of ``good'' instances  appears in the left-most column of the table of results. Note that not every ``good'' instance was actually solved by RCP method because of the size. One can hence notice that the final value of the objective function is sometimes substantially different from the reference one, although the timeout parameter was $86400$ seconds. % (1 day).
Table~\ref{tab:sdplib2} lists all the results obtained for 1~day timeout ($86400$ seconds). Namely, we interrupt the execution and return solution reached so far as soon as the timeout has been exceeded but simulation is still running. On some instances, the RCP method converged close to the reference value. %others converged with different inaccuracies ranging from somewhat close to unacceptable. 
Table~\ref{tab:sdplib-seven-days} lists all similar results obtained for 7~day timeout ($604800$ seconds).
A few instances did not make any substantial progress beyond the very first iteration of the hit{\&}run algorithm because of their large size, such as  \texttt{theta6}, where initial and final objective function values are the same. 

%The Table~\ref{?} gives the result for 7 days of simulation. \textbf{TODO:} the results are not ready yet ... 

One possibility we have not explored yet, would be to employ sparse matrices for very large problems.

All numerical experiments have been conducted on a server  equipped with 44 cores / 88 hardware threads of \texttt{Intel(R) Xeon(R) CPU E5-2699 v4 @ 2.20GHz}, hundreds of gigabytes of memory, using RedHat 7 OS and Anaconda environment with Python 3.9 and Numpy, Scipy, CVXPY packages installed. The entire code was written in Python.

During our experimentation, we noticed that enabling multi-threading actually significantly slows down the RCP solver. We attribute this behaviour to some software issues, which we did not investigate in depth. Instead, we run all the instances in a separate process without multi-threading, utilising a single CPU core per problem.

\paragraph{Implementation details}

In practice, we faced certain challenges trying to implement RCP as presented. A few modifications in the main algorithms have been made to address those issues:
\begin{enumerate}
\item In the current implementation, the boundary points are computed by the classical generalised eigensolver from Scipy package, which expects a positive definite matrix $B$ (as in Lemma~5), where the problem $A v_i = -\lambda_i B v_i$ is considered. The (imperfect) quantum eigensolver is modelled by adding an artificial noise to the eigenvalues.  In our case, the matrix $A$ is negative (semi)definite, while $B$ is sign indefinite. We actually solve the problem $\mu_i (-A) v_i = B v_i$ and then take the reciprocal $\lambda_i = 1 / \mu_i$. Moreover, when $A$ is close to semi-definiteness, the eigensolver becomes unstable. If that case, we add a tiny value to diagonal elements of $A$ in order to make it strictly positive definite and repeat the eigensolve one more time.
\item In Section~\ref{sec:impl}, the pseudocode of Algorithm \texttt{H\&R\_SDP} in Line~5 performs the isotropization via computation of a square root of matrix $Y$. The same result can be obtained  faster via Cholesky decomposition, which is actually done in our code.
\item In Line~7 of the same algorithm (\texttt{H\&R\_SDP}), a point $y^{(i+1)}$ is generated inside the segment $[\underline{x},\bar{x}]$. In our implementation, we prevent the point from taking end values by a small margin (about $0.001$ of the segment size).
\item In Lines~8 to 10 of the same algorithm (\texttt{H\&R\_SDP}), the loop is repeated a number of times (by default up to $200$ attempts before we claim no further improvements can be done). %(\textbf{TODO: strange loop arrangement!!!}) 
Instead of generating a new couple of boundary points $\underline{x}$, $\bar{x}$ in every iteration of the inner loop, which is very expensive, we repeat Line~7 five times. If $y^{(i+1)}$ remains infeasible,  Line~6 is activated.
\item In the same algorithm (\texttt{H\&R\_SDP}), on Line~3 instead of using the same starting point $x^{(0)}$, we use the last feasible $y$ obtained after previous invocation of \texttt{H\&R\_SDP} algorithm. 
\item The mixing time $M$ is equal to 10 in all simulations. The number $N$ of samples generated in every outer iteration is equal to the  number of variables multiplied by $100$. The latter factor is accountable for the slow performance on large problem instances, although the big number of samples provides better convergence rate.
\end{enumerate}
Sometimes, a slightly deeper minimum can be attained without steps (c) to (e), but this  depends on randomisation. For the hard instances, these steps bring a noticeable improvement of convergence.

\paragraph{Observations}
For the \texttt{qap6} instance, the time spent by the different modules is distributed as follows (as percentage of the total execution time):

\vspace{0.5em}
\begin{tabular}{ll}
Computation of $F(x) = F_0 + \sum_{i=1}^n x_i F_i$: & \qquad $41\,\%$ \\
Computation of generalised eigen-values: & \qquad $22.3\,\%$ \\
Cholesky decomposition: & \qquad $6\,\%$ \\
Generation of random vectors: & \qquad $5.2\,\%$ \\
Other subroutines: & \qquad $25.5\,\%$ \\
\end{tabular}
\vspace{0.5em}

Cholesky decomposition is used to check feasibility of the constraint $F(x) \preceq 0$ and for isotropization. Surprisingly, the computation of $F(x)$  dominates the total run time, although it is implemented very rationally (one line of Python code) with full utilisation of Numpy optimised backend.   

\paragraph{Volume shrinkage}
It might be insightful to see how the volume of convex body is shrinking as iteration process progresses. Again, we selected the \texttt{qap6} instance for  demonstration purposes. This particular instances is challenging (sensitive to sampling scheme), but large enough and solvable in a reasonable time. Fig.~\ref{fig:volume-shrinkage} shows minimum, maximum and mean eigenvalues of the covariance matrix of a cloud of points sampled on every iteration of hit{\&}run algorithm. 

\begin{figure}
\centering
\includegraphics[width=0.85\textwidth]{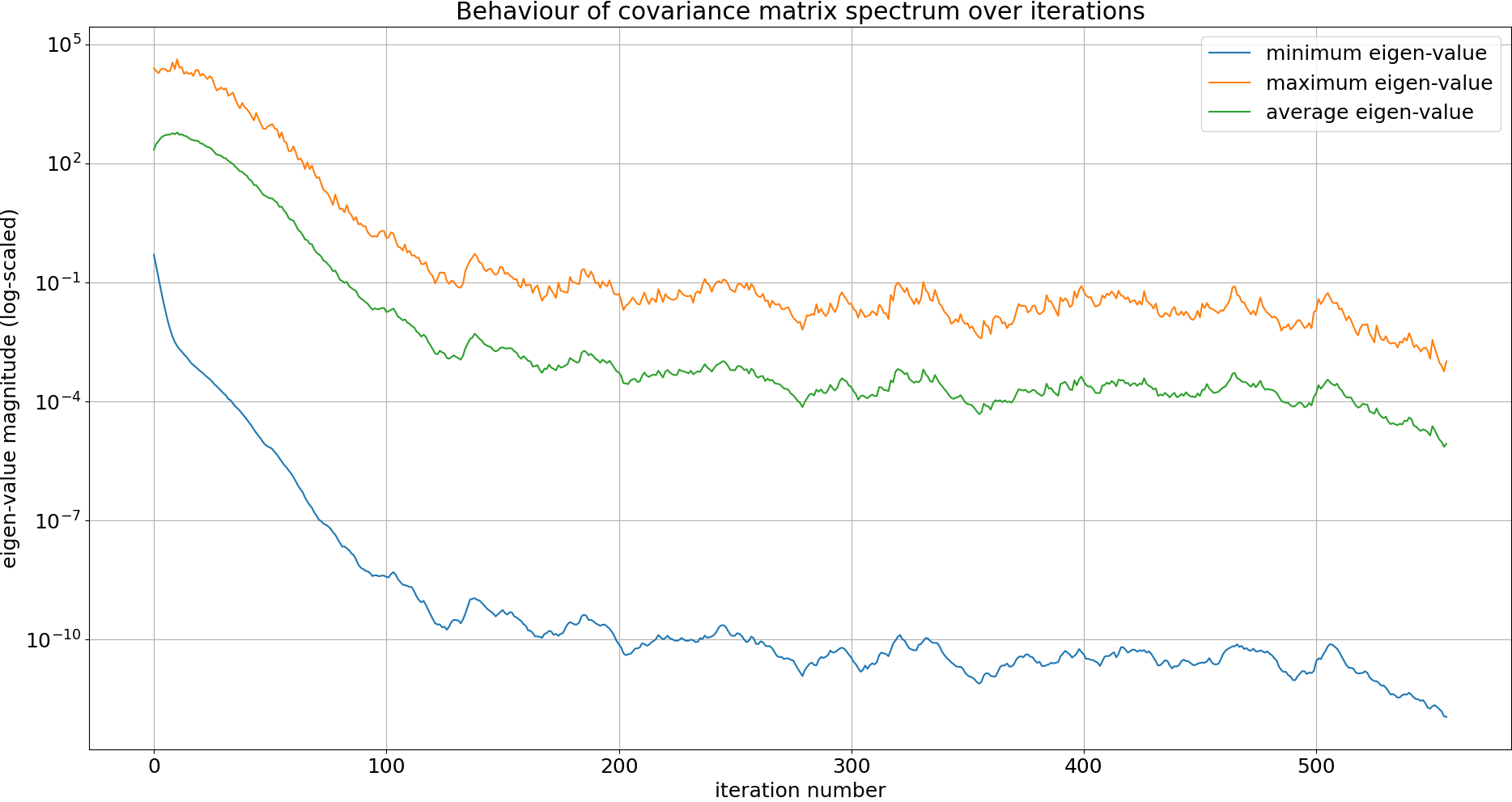}
\caption{Here we demonstrate the process of shrinking of convex body volume over iterations. The volume is roughly proportional to determinant of the covariance matrix of a cloud of sampled points. Since determinant of high-dimensional covariance matrix is either very big or very small, a better visual experience can be drawn from the behaviour of extreme eigen-values as well as the mean one. The data were obtained for \texttt{qap6} problem.}
\label{fig:volume-shrinkage}
\end{figure}

\clearpage
\section{Results of the Implementation}
\label{app:res}
\begin{table*}[h!]
\caption{An overview of the behaviour of RCP on a subset of smaller problems from SDPLIB, a well-known benchmark, within a 24-hour time limit ($86400$ seconds): Instance name, dimensions (first is the number of constraint matrices), reference objective function value, objective function value at termination of RCP, initial objective function at the beginning of RCP, DIMACS Error 1 \cite{borchers1999sdplib,mittelmann2003independent} at termination, DIMACS Error 2 \cite{borchers1999sdplib,mittelmann2003independent} at termination, and run-time in seconds. \\[3mm]}
\label{tab:sdplib2}
\begin{tabularx}{\textwidth}{@{\extracolsep{\fill}}llrrrrrr}
\hline
Instance & size & Ref. & RCP(T) & Initial & Err1(T) & Err2(T) & Time [s] \\
\hline
  gpp100 &  101x100x100 &   -44.94 &  -44.94 &  -18.40 &  950.45 & 0.0 & 86400.32 \\
gpp124-1 &  125x124x124 &    -7.34 &   -6.34 &   18.08 &  152.92 & 0.0 & 86400.30 \\
gpp124-2 &  125x124x124 &   -46.86 &  -45.77 &  -19.86 &  718.02 & 0.0 & 86400.34 \\
gpp124-3 &  125x124x124 &  -153.01 & -150.81 &  -97.13 &  815.00 & 0.0 & 86400.26 \\
gpp124-4 &  125x124x124 &  -418.99 & -407.43 & -267.03 & 1551.43 & 0.0 & 86400.36 \\
   hinf1 &     13x14x14 &     2.03 &    2.09 &    2.25 &    7.30 & 0.0 &    84.31 \\
  hinf10 &     21x18x18 &   108.71 &  122.41 &  151.07 & 4998.56 & 0.0 &   294.37 \\
  mcp100 &  100x100x100 &   226.16 &  226.27 &  318.12 &    0.11 & 0.0 & 86400.23 \\
mcp124-1 &  124x124x124 &   141.99 &  160.95 &  250.15 &    0.08 & 0.0 & 86400.27 \\
mcp124-2 &  124x124x124 &   269.88 &  282.22 &  383.00 &    0.11 & 0.0 & 86400.25 \\
mcp124-3 &  124x124x124 &   467.75 &  486.34 &  614.28 &    0.15 & 0.0 & 86400.25 \\
mcp124-4 &  124x124x124 &   864.41 &  901.95 & 1120.64 &    0.23 & 0.0 & 86400.24 \\
mcp250-2 &  250x250x250 &   531.93 &  741.00 &  758.01 &    0.12 & 0.0 & 86400.73 \\
mcp250-3 &  250x250x250 &   981.17 & 1238.95 & 1272.45 &    0.17 & 0.0 & 86400.63 \\
mcp250-4 &  250x250x250 &  1681.96 & 2100.18 & 2155.74 &    0.26 & 0.0 & 86400.76 \\
    qap5 &    136x26x26 &  -436.00 & -435.98 &  256.19 &   82.88 & 0.0 & 12641.77 \\
    qap6 &    229x37x37 &  -381.44 & -379.29 &  187.39 &   20.77 & 0.0 & 86400.19 \\
    qap7 &    358x50x50 &  -424.82 & -178.42 &  290.45 &   10.08 & 0.0 & 86400.27 \\
    qap8 &    529x65x65 &  -756.96 &  -12.57 &  420.27 &    8.71 & 0.0 & 86400.32 \\
    qap9 &    748x82x82 & -1409.94 &  -38.38 &  745.86 &   10.78 & 0.0 & 86400.40 \\
  theta1 &    104x50x50 &    23.00 &   23.00 &   53.33 &  106.62 & 0.0 &  5967.03 \\
  theta2 &  498x100x100 &    32.88 &  145.51 &  152.09 &  826.16 & 0.0 & 86400.38 \\
  theta3 & 1106x150x150 &    42.17 &   82.34 &   82.34 &  568.98 & 0.0 & 86401.44 \\
  theta4 & 1949x200x200 &    50.32 &  147.05 &  147.05 & 1186.88 & 0.0 & 86403.63 \\
  theta5 & 3028x250x250 &    57.23 &  255.70 &  255.70 & 2327.25 & 0.0 & 86407.97 \\
  theta6 & 4375x300x300 &    63.48 &  410.29 &  410.29 & 4106.66 & 0.0 & 86420.12 \\
  truss1 &      6x13x13 &    -9.00 &   -9.00 &   -7.10 &    5.90 & 0.0 &    17.30 \\
  truss2 &   58x133x133 &  -123.38 & -123.00 &  -21.46 &   75.54 & 0.0 & 86400.21 \\
  truss3 &     27x31x31 &    -9.11 &   -9.11 &   -5.23 &    4.78 & 0.0 &   794.89 \\
  truss4 &     12x19x19 &    -9.01 &   -9.00 &   -5.83 &    5.38 & 0.0 &    89.30 \\
\hline
\end{tabularx}
\end{table*}

\begin{table*}[h!]
\caption{An overview of the behaviour of RCP on a subset of smaller problems from SDPLIB, a well-known benchmark, within a week long time limit ($604800$ seconds): Instance name, dimensions (first is the number of constraint matrices), reference objective function value, objective function value at termination of RCP, initial objective function at the beginning of RCP, DIMACS Error 1 \cite{borchers1999sdplib,mittelmann2003independent} at termination, DIMACS Error 2 \cite{borchers1999sdplib,mittelmann2003independent} at termination, and run-time in seconds. \\[3mm]}
\label{tab:sdplib-seven-days}
\begin{tabularx}{\textwidth}{@{\extracolsep{\fill}}llrrrrrr}
\hline
Instance & size & Ref. & RCP(T) & Initial & Err1(T) & Err2(T) & Time [s] \\
\hline
  gpp100 &  101x100x100 &   -44.94 &  -44.94 &  -18.40  &  950.46 & 0.0 &  82692.28 \\
gpp124-1 &  125x124x124 &    -7.34 &   -7.34 &   18.08  &  212.81 & 0.0 & 205997.33 \\
gpp124-2 &  125x124x124 &   -46.86 &  -46.86 &  -19.86  &  872.42 & 0.0 & 149677.75 \\
gpp124-3 &  125x124x124 &  -153.01 & -153.01 &  -97.13  &  862.31 & 0.0 & 153632.17 \\
gpp124-4 &  125x124x124 &  -418.99 & -418.99 & -267.03  & 1667.97 & 0.0 & 164294.46 \\
   hinf1 &     13x14x14 &     2.03 &    2.09 &    2.25  &    7.30 & 0.0 &     60.85 \\
  hinf10 &     21x18x18 &   108.71 &  122.41 &  151.07  & 4998.56 & 0.0 &    240.51 \\
  mcp100 &  100x100x100 &   226.16 &  226.16 &  318.12  &    0.11 & 0.0 &  74340.28 \\
mcp124-1 &  124x124x124 &   141.99 &  141.99 &  250.15  &    0.07 & 0.0 & 188645.49 \\
mcp124-2 &  124x124x124 &   269.88 &  269.88 &  383.00  &    0.10 & 0.0 & 178055.59 \\
mcp124-3 &  124x124x124 &   467.75 &  467.75 &  614.28  &    0.14 & 0.0 & 171032.22 \\
mcp124-4 &  124x124x124 &   864.41 &  864.41 & 1120.64  &    0.20 & 0.0 & 168804.84 \\
mcp250-2 &  250x250x250 &   531.93 &  579.59 &  758.01  &    0.08 & 0.0 & 604800.46 \\
mcp250-3 &  250x250x250 &   981.17 & 1036.79 & 1272.45  &    0.12 & 0.0 & 604800.48 \\
mcp250-4 &  250x250x250 &  1681.96 & 1775.46 & 2155.74  &    0.17 & 0.0 & 604800.49 \\
    qap5 &    136x26x26 &  -436.00 & -435.98 &  256.19  &   82.88 & 0.0 &  11396.36 \\
    qap6 &    229x37x37 &  -381.44 & -379.67 &  187.39  &   20.75 & 0.0 & 127062.87 \\
    qap7 &    358x50x50 &  -424.82 & -423.03 &  290.45  &   11.47 & 0.0 & 410249.13 \\
    qap8 &    529x65x65 &  -756.96 & -744.99 &  420.27  &   10.04 & 0.0 & 604800.31 \\
    qap9 &    748x82x82 & -1409.94 & -965.46 &  745.86  &   11.90 & 0.0 & 604800.37 \\
  theta1 &    104x50x50 &    23.00 &   23.00 &   53.33  &  106.62 & 0.0 &   5088.24 \\
  theta2 &  498x100x100 &    32.88 &   50.36 &  152.09  &  287.07 & 0.0 & 604800.34 \\
  theta3 & 1106x150x150 &    42.17 &   76.34 &   82.34  &  531.18 & 0.0 & 604800.62 \\
  theta4 & 1949x200x200 &    50.32 &  143.99 &  147.05  & 1188.87 & 0.0 & 604802.40 \\
  theta5 & 3028x250x250 &    57.23 &  255.70 &  255.70  & 2327.25 & 0.0 & 604805.45 \\
  theta6 & 4375x300x300 &    63.48 &  410.29 &  410.29  & 4106.66 & 0.0 & 604811.28 \\
  truss1 &      6x13x13 &    -9.00 &   -9.00 &   -7.10  &    5.90 & 0.0 &     16.35 \\
  truss2 &   58x133x133 &  -123.38 & -123.00 &  -21.46  &   75.54 & 0.0 &  57755.23 \\
  truss3 &     27x31x31 &    -9.11 &   -9.11 &   -5.23  &    4.78 & 0.0 &    684.62 \\
  truss4 &     12x19x19 &    -9.01 &   -9.00 &   -5.83  &    5.38 & 0.0 &     81.22 \\
\hline
\end{tabularx}
\end{table*}

\end{document}